\pgfplotsset{compat=newest}
\let\mybibitem\bibitem
\renewcommand{\bibitem}[1]{%
  \color{black}%
  \ifstrequal{#1}{9978646}{\color{black}\mybibitem{#1}}{%
    \ifstrequal{#1}{9968296}{\color{black}\mybibitem{#1}}{%
        \ifstrequal{#1}{10238831}{\color{black}\mybibitem{#1}}{%
            \ifstrequal{#1}{10065563}{\color{black}\mybibitem{#1}}{%
                \ifstrequal{#1}{10375242}{\color{black}\mybibitem{#1}}{%
                    \mybibitem{#1}%
                    }%
            }%
       }%
    }%
  }%
}
\definecolor{darkgreen}{rgb}{0, 0.5, 0} 
\definecolor{lightpurple}{rgb}{0.7, 0.4, 1} 
\definecolor{orcidlogocol}{HTML}{A6CE39}
\tikzset{
orcidlogo/.pic={
\fill[orcidlogocol] svg{M256,128c0,70.7-57.3,128-128,128C57.3,256,0,198.7,0,128C0,57.3,57.3,0,128,0C198.7,0,256,57.3,256,128z};
\fill[white] svg{M86.3,186.2H70.9V79.1h15.4v48.4V186.2z}
svg{M108.9,79.1h41.6c39.6,0,57,28.3,57,53.6c0,27.5-21.5,53.6-56.8,53.6h-41.8V79.1z M124.3,172.4h24.5c34.9,0,42.9-26.5,42.9-39.7c0-21.5-13.7-39.7-43.7-39.7h-23.7V172.4z}
svg{M88.7,56.8c0,5.5-4.5,10.1-10.1,10.1c-5.6,0-10.1-4.6-10.1-10.1c0-5.6,4.5-10.1,10.1-10.1C84.2,46.7,88.7,51.3,88.7,56.8z};
}
}
\newcommand\orcidicon[1]{\href{https://orcid.org/#1}{\mbox{\scalerel*{
\begin{tikzpicture}[yscale=-1,transform shape]
\pic{orcidlogo};
\end{tikzpicture}
}{|}}}}
\newtheorem{proof}{Proof}
\newtheorem{proposition}{\bf{Proposition}}
\newtheorem{corollary}{\bf{Corollary}}
\newtheorem{remark}{Remark}
\newcommand*{\QED}[1][$\blacksquare$]{%
\leavevmode\unskip\penalty9999 \hbox{}\nobreak\hfill
\quad\hbox{#1}%
}
\begin{document} 

\title{\huge Miniature UAV-Aided Cooperative THz Networks with Reconfigurable Energy Harvesting Holographic Surfaces}

\author{


Yifei Song, 
Jalal Jalali, \IEEEmembership{Member, IEEE}, 
Yanyu Qin,
Mostafa Darabi, \IEEEmembership{Member, IEEE}, 
Filip Lemic, \IEEEmembership{Member, IEEE}, and\
Natasha Devroye, \IEEEmembership{Fellow, IEEE}

\thanks{Yifei Song and Yanyu Qin are affiliated with the Bradley Department of Electrical and Computer Engineering and the Department of Computer Science at Virginia Tech, Blacksburg, VA, USA, respectively.
Jalal Jalali and Mostafa Darabi are with the Wireless Communication Research Group, JuliaSpace LLC., Chicago, IL, USA.
Filip Lemic is affiliated with the AI-Driven Systems Lab, i2Cat Foundation, Spain, and the Faculty of Electrical Engineering and Computing, University of Zagreb, Croatia.
Natasha Devroye is affiliated with the Department of Electrical and Computer Engineering at the University of Illinois at Chicago, Chicago, IL, USA.
The corresponding author is Jalal Jalali (josh@juliaspace.com).
}
\thanks{This work was supported by the Belgian American Educational Foundation (BAEF), and in part, by the U.S. National Science Foundation under Grant CNS-2225511. 
}
}

\maketitle
\begin{abstract}
This paper focuses on enhancing the energy efficiency (EE) of a cooperative network that features a miniature unmanned aerial vehicle (UAV) operating at terahertz (THz) frequencies and equipped with holographic surfaces to improve network performance. Unlike traditional reconfigurable intelligent surfaces (RIS), which serve as passive relays for signal reflection, this work introduces a novel concept: energy harvesting (EH) using reconfigurable holographic surfaces (RHS). These surfaces provide more powerful and focused energy delivery during wireless power transfer than RIS and are mounted on the miniature UAV. In this system, a source node enables the UAV to simultaneously receive both information and energy signals, with the harvested energy powering data transmission to a specific destination. The EE optimization problem involves adjusting non-orthogonal multiple access (NOMA) power coefficients and the UAV's flight path while accounting for the unique characteristics of the THz channel. The problem is solved in two stages to maximize EE and meet a target transmission rate. \textcolor{black}{The UAV trajectory is optimized using a successive convex approximation (SCA) method, followed by the adjustment of NOMA power coefficients through a quadratic transform technique.} Simulation results demonstrate the effectiveness of the proposed algorithm, showing significant improvements over baseline methods.
\end{abstract}
\begin{IEEEkeywords}
Cooperative communication, energy efficiency (EE), energy harvesting (EH), reconfigurable holographic surfaces (RHS), and miniature unmanned aerial vehicles (UAV).
\end{IEEEkeywords}

\section{Introduction}
\indent 
\IEEEPARstart{B}{uilding} on the original vision of 5G and extending into 6G, future wireless networks are expected to support enhanced mobile broadband (eMBB), ultra-reliable and low latency communications (URLLC) massive machine types communications (mMTC) with the requirements of improving data rates, improving network capacity, reducing latency, and minimizing energy consumption~\cite{10077210}.
Researchers are investigating new network topologies focusing on establishing extensive backhaul links to boost network capacity~\cite{9022993}. 
Unmanned aerial vehicles (UAVs) have gained significant attention due to their unique advantage of establishing line-of-sight (LOS) communication links, enabling them to provide services to ground users while meeting quality of service (QoS) requirements.
Specifically, miniature UAVs \cite{10423569}, including nano-UAVs—characterized by palm-sized dimensions and weights below 250 grams—and micro-UAVs, which are backpack-portable and weigh between 250 grams and 20 kilograms, are particularly well-suited for operations in confined spaces due to their agility and compact design \cite{10113154}. In contrast to standard UAVs, which may exceed 150 kilograms, miniature UAVs can safely operate in close proximity to humans \cite{10423569}, making them ideal for indoor applications~\cite{9440350}. These capabilities enable advanced use cases, such as deployment in industrial indoor rich-scattering environments~\cite{9600850},
precise environmental monitoring~\cite{9817083}, emergency search-and-rescue in collapsed structures~\cite{9440534}, and enhanced immersive virtual reality~\cite{dang2022low} experiences through real-time mapping and interaction.

Simultaneously, ensuring massive connectivity within the mMTC framework remains a significant challenge, as accommodating trillions of devices within the already congested and limited sub-6 GHz spectrum becomes increasingly difficult. Moreover, the rapid growth of the Internet of Things (IoT), with millions to billions of ubiquitously connected devices, presents even more severe challenges in supporting such large-scale connectivity. To address these limitations, a shift towards higher-frequency terahertz (THz) communication is being actively explored, offering the potential for data rates in the hundreds of gigabits per second.
Additionally, non-orthogonal multiple access (NOMA)~\cite{9022993, 9340353} is gaining traction as a method to support multiple users simultaneously on the same frequency and time slots, using efficient interference cancellation techniques.
Together, these advancements aim to meet the growing demands of next-generation communication networks.
\begin{table*}[ht]
\caption{Overview of RIS- and RHS-assisted UAV works with focus on EH, objectives, and optimization approaches.}
\label{table:lit}
\centering
\begin{tabular}{|l|l|l|l|l|l|}
\hline
Ref.                                   & EH         & RHS        & Objectives                                                                                      & Optimization Parameters                                                  & Algorithms                                        \\ \hline
\cite{9989438}             & \XSolid    & \XSolid    & \makecell[l]{Enhance EE by optimizing UAV trajectory \\ and number of RIS elements}             & \makecell[l]{UAV weight, RIS weight, \\ coverage probability}           & \makecell[l]{ALOHA \& Code \\ Combining MAC protocol}             \\ \hline
\cite{9645164}             & \XSolid    & \XSolid    & \makecell[l]{Enhance EE via UAV power allocation \\ and RIS phase shift optimization}           & \makecell[l]{Power allocation, \\ phase-shift matrix}                   & DRL                                               \\ \hline
\cite{kumar2024maximizing} & \Checkmark & \XSolid    & \makecell[l]{Enhance EE using time-space RIS-assisted \\ EH scheme}                             & \makecell[l]{QoS, trajectory, \\ resource allocation}                   & DRL                                               \\ \hline
\cite{9771999}             & \Checkmark & \XSolid    & \makecell[l]{Maximize harvested energy using RIS-assisted \\ EH while maintaining QoS}          & \makecell[l]{Passive reflect-arrays, \\ resource allocation}            & DRL                                               \\ \hline
\cite{10274676}            & \Checkmark & \XSolid    & \makecell[l]{Maximize bidirectional throughput in \\ RIS-assisted UAV networks}                 & \makecell[l]{Time allocation, transmit power, \\ EH ratio, trajectory}  & \makecell[l]{Block coordinate descent \\ \& unsupervised learning} \\ \hline
\cite{10399860}            & \Checkmark & \XSolid    & \makecell[l]{Enhance end-to-end throughput in \\ UAV-enabled EH relay network}                  & \makecell[l]{Throughput, transmit power, \\ trajectory}                 & \makecell[l]{Block coordinate descent \\ \& Lagrange duality}      \\ \hline
\cite{9635669}             & \Checkmark & \XSolid    & \makecell[l]{Evaluate coverage with \\ non-linear EH model}                                     & \makecell[l]{Coverage probability, trajectory, \\ transmission power, harvesting energy} & Stochastic geometry                               \\ \hline
\cite{sheemar2025joint}   & \XSolid    & \Checkmark & \makecell[l]{Maximize sum rate in \\ RHS-assisted UAV communication}                            & \makecell[l]{UAV position, digital beamforming, \\ holographic beamforming weights} & Gradient-ascent                                  \\ \hline
\cite{9696209}            & \XSolid    & \Checkmark & \makecell[l]{Optimize the sum rate for \\ multi-user communications}                            & \makecell[l]{Digital \& holographic beamforming,\\ receive combining}   & Coordinate ascent                                  \\ \hline
\cite{9848831}            & \XSolid    & \Checkmark & \makecell[l]{Maximize the sum rate using holographic-assisted \\ beamforming for LEO satellite communications} & \makecell[l]{Digital \& holographic beamforming} & Dynamic programming \\ \hline
\textbf{This work}         & \Checkmark & \Checkmark & \makecell[l]{Enhance EE and transmission rate in \\ holographic EH UAV networks}                & \makecell[l]{Power allocation parameters, \\ UAV trajectory}            & \makecell[l]{SCA }                  \\ \hline 
\end{tabular}
\end{table*}

\textcolor{black}{
More recently, the integration of UAVs with reconfigurable intelligent surfaces (RIS) has emerged as a promising approach to enhance wireless communication by utilizing the high mobility of UAVs and the ability of RIS to control signal reflections~\cite{10530340}. RIS technology, typically used as passive relays, adjusts the phase of incoming signals to improve coverage and capacity without requiring active amplification. However, RIS remains fundamentally limited by its reliance on external power supplies, control circuits, and its inability to sustain operation autonomously.
}

\textcolor{black}{
\textit{This paper builds on this concept by introducing Reconfigurable Holographic Surfaces (RHS) with integrated Energy Harvesting (EH)~\cite{jalali2025shape}}, thereby extending the RIS paradigm. To the best of our knowledge, we are the first to propose UAV–RHS integration, where RHS not only supports signal manipulation but also enables wireless power transfer for energy harvesting.
}
\textcolor{black}{
Compared with conventional RIS, RHS offers several fundamental advantages for UAV-assisted networks:
(i) \textbf{Energy neutrality:} RHS can harvest RF energy to power its own embedded sensors and control circuitry, eliminating the need for dedicated power sources, amplifiers, RF chains, or mixers~\cite{9136592}. This self-sustaining design is particularly critical for miniature UAVs, which are highly constrained by payload and limited battery capacity.
(ii) \textbf{Lightweight and compact design:} RHS employs ultra-thin, low-cost metasurfaces that can be seamlessly integrated onto UAVs or deployed on walls, ceilings, and industrial structures~\cite{9690474}, unlike RIS implementations that often require larger form factors and auxiliary circuitry.
(iii) \textbf{Scalability and deployment flexibility:} The lightweight and hardware-efficient architecture of RHS enables dense and practical deployment in diverse scenarios, including warehouses, factories, and even wearable applications~\cite{8741198}.
}

\textcolor{black}{
These advantages make RHS particularly attractive for UAV-assisted systems, where both flight endurance and communication performance are critical. By harvesting energy from impinging signals, RHS alleviates UAVs’ reliance on external energy sources while simultaneously enhancing coverage and communication quality of service (QoS). This combination of energy-neutral operation, compact design, and deployment flexibility positions RHS as a feasible and transformative technology for maximizing the energy efficiency (EE) of THz UAV-assisted networks.
}
The key contributions are as follows:
\begin{itemize}
    \item To the best of our knowledge, this is the first work to explicitly integrate RHS for EH within a novel THz NOMA cooperative communication model. In this model, the source node transmits: (1) a superimposed message to both the UAV and the destination node using distinct power allocation coefficients, as per NOMA principles; and (2) a dedicated power signal to the reconfigurable holographic surface (RHS) for EH, thereby extending its operational lifetime.
    \item An optimization problem is formulated and solved to optimize the NOMA power allocation coefficients and the UAV trajectory in a three-dimensional (3D) system, aiming to maximize both the system's EE and the target transmission rate.
\end{itemize}
The paper is organized as follows: Section \ref{section_2} introduces the related work. Section \ref{section_3} presents the system model and optimization problem. Section \ref{section_4} covers the solution approach. Section \ref{chap8_sec6} discusses simulation results, and Section \ref{sec_5} concludes with future research directions.


In this paper, the notation is as follows: non-bold lowercase letters \(a\) denote scalars, bold lowercase letters \(\mathbf{a}\) represent vectors, bold uppercase letters \(\mathbf{A}\) represent matrices, and calligraphic letters \({\mathcal{A}}\) denote tensors. The symbol \((\cdot)^{\mathrm{T}}\) indicates the transpose operation. The set of real numbers is represented by \(\mathbb{R}\). For the convenience of readers, the notations used throughout this paper are summarized in Table~\ref{table:notation_updated}.
\begin{figure*}[ht]
    \centering
    \begin{subfigure}[b]{0.49\textwidth}
        \centering
        \includegraphics[width=\textwidth]{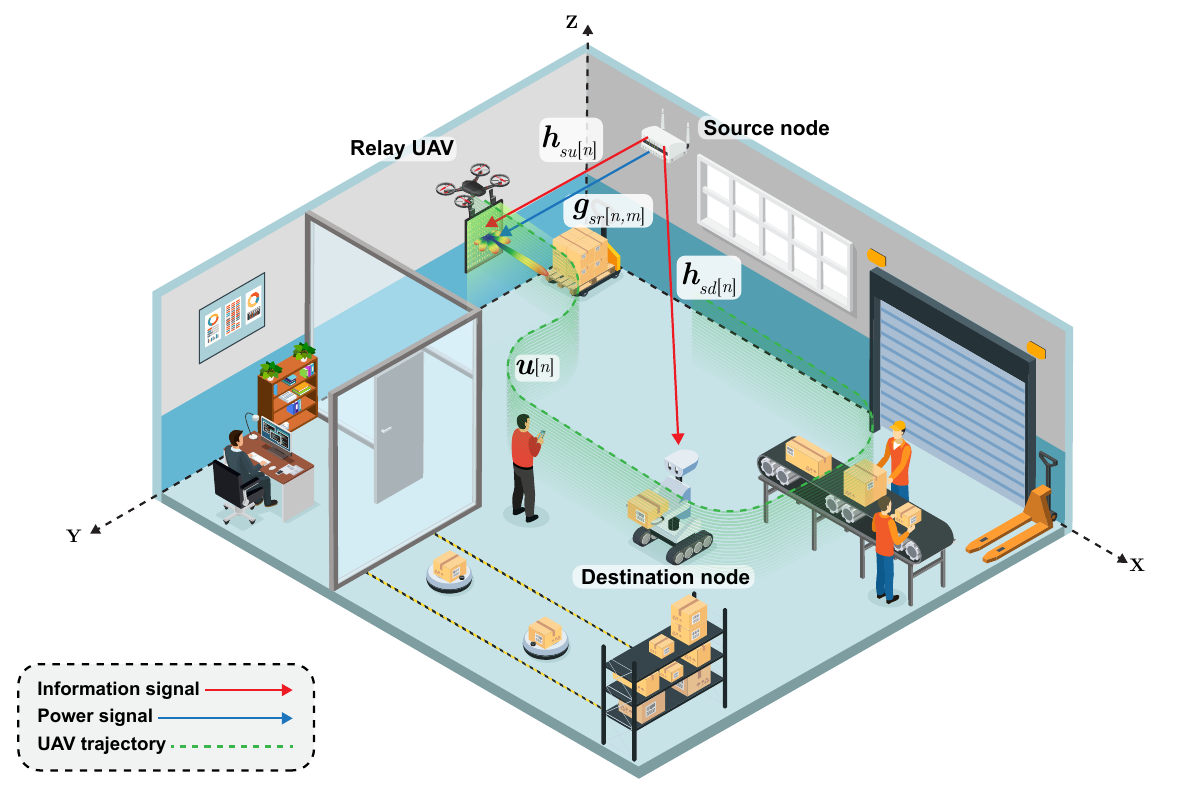}
        \caption{Episode 1: Direct Transmission and EH with RHS.}
        \label{fig1:subfig1}
    \end{subfigure}
    \hfill
    \begin{subfigure}[b]{0.49\textwidth}
        \centering
        \includegraphics[width=\textwidth]{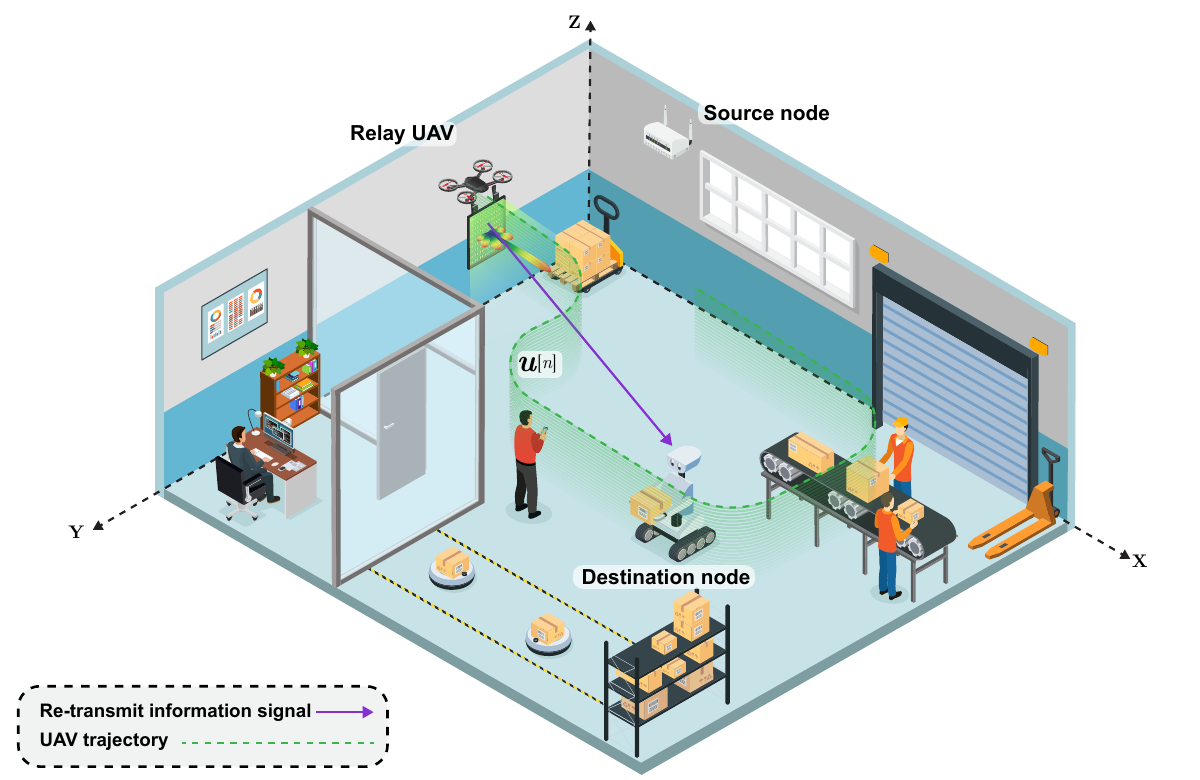}
        \caption{Episode 2: Cooperative Transmission.}
        \label{fig1:subfig2}
    \end{subfigure}
    \caption{Illustration of cooperative transmission by miniature UAVs in a THz network empowered by EH RHS. (a) Episode 1: Direct information transmission from the source node to the destination node, information transmission from the source node to the relay UAV, and power signal transmission from the source node to the UAV. (b) Episode 2: Information re-transmission from the relay UAV to the destination node using the energy harvested via EH RHS.}
    \label{fig1}
\end{figure*}

\begin{table}[ht]
\caption{Summary of Notation}
\label{table:notation_updated}
\centering
\setlength{\tabcolsep}{4pt} 
\begin{tabular}{|l|p{5cm}|}  
\hline
\textbf{Symbol} & \textbf{Description} \\
\hline
$\boldsymbol{u}[n]$ & UAV's 3D position at time slot $n$ \\
$\boldsymbol{s}[n]$ & Source node position at time slot $n$ \\
$\boldsymbol{d}[n]$ & Destination node position at time slot $n$ \\
$\boldsymbol{r}[n,m]$ & Position of $m$-th RHS element at slot $n$ \\
$T$ & Total UAV operation time \\
$N$ & Number of time slots \\
$V_{\max}$ & Maximum UAV speed \\
$\Delta_t$ & Duration of each time slot \\
$h_{su}[n]$, $h_{ud}[n]$, $h_{sd}[n]$ & Channel gains from source-to-UAV, UAV-to-destination, and source-to-destination \\
$g_{sr}[n,m]$ & Channel gain from source to $m$-th RHS element \\
$\xi(f)$ & Molecular absorption coefficient \\
$\fgeeszett_0$ & Reference path gain\\
$s[n]$ & Transmitted NOMA signal at time slot $n$ \\
$s_1[n]$, $s_2[n]$ & Information symbols intended for UAV and destination, respectively \\
$\fgelb_1[n]$, $\fgelb_2[n]$ & Power allocation coefficients for NOMA signals \\
$\fged_{\rm peak}$, $\fged_{\max}$ & Max instantaneous and average source transmit powers \\
$\varsigma_1[n]$, $\varsigma_2[n]$ & Power split ratios for ID and EH antennas \\
$\textdotbreve{a}[m]$, $\textdotbreve{a}_M$ & Absorption coefficient of $m$-th RHS element and uniform coefficient \\
$\omega[m]$, $\omega_M$ & Phase shift of $m$-th RHS element and uniform phase shift \\
$\fgef_{d \leftarrow u}^{(1)}[n]$, $\fgef_r^{(1)}[n]$ & SINR at UAV decoding destination's and own signals in Episode 1 \\
$\fgef_d^{(1)}[n]$, $\fgef_d^{(2)}[n]$, $\fgef_d^{\rm MRC}[n]$ & SINR at destination in Episodes 1, 2, and combined by MRC \\
$\mathcal{E}[n]$ & Energy harvested by RHS at time slot $n$ \\
$\fged[n]$, $\fged_{\rm EH}$ & Minimum required harvested power and harvested power at EH RHS \\
$\eta$ & Energy harvesting efficiency \\
$\mathcal{T}[n]$ & Transmission fraction of Episode 1 at time slot $n$ \\
${\fged_{\rm RHS}}[n]$ & UAV transmit power in Episode 2 powered by harvested energy \\
$\fged_c$ & UAV circuit power consumption \\
$\fged_{\rm sum}[n]$ & Total system power consumption at time slot $n$ \\
$R_{\rm sum}[n]$ & Sum rate at time slot $n$ \\
$\eta_{\rm EE}[n]$ & Energy efficiency at time slot $n$ \\
${\fgef_{\min}}[n]$, ${\gamma_{\min}}[n]$ & Minimum SINR thresholds at UAV and destination \\
${\varrho_1}^{(1)}[n]$, ${\varrho_2}^{(2)}[n]$ & Additive noise at destination during Episodes 1 and 2 \\
$\varepsilon_1^2[n]$, $\varepsilon_2^2[n]$ & Noise power at destination during Episodes 1 and 2 \\
$\boldsymbol{\alpha}$, $\boldsymbol{\beta}$ & Auxiliary parameter vectors for fractional objective transformation \\
\hline
\end{tabular}
\end{table}

\section{Related Work} \label{section_2}
Some studies have proposed energy-aware UAV-RIS models to enhance the number of tasks completed per flight \cite{9989438}. 
For example, Nguyen et al.~\cite{9645164} formulate an EE maximization problem and employ deep reinforcement learning to jointly optimize UAV power allocation and the RIS phase shift matrix.
Similarly, Kumar et al. \cite{kumar2024maximizing} and Peng et al. \cite{10051712} address power allocation with EH in a RIS-assisted UAV network under a dynamic wireless environment, using a deep reinforcement learning framework to enhance energy efficiency.
Xiao et al. \cite{10092842} introduced a solar-powered UAV-mounted RIS that provides external propulsion power and maximizes EE by optimizing the UAV trajectory alongside the beamforming active states.
Tyrovolas et al. \cite{10348506} studied a harvest-and-reflect (HaR) protocol designed to harvest energy for information transmission.
Lyu et al. \cite{9214497} explored a hybrid access point that transfers energy to both the RIS and users, enabling self-sustainable information transmission following the EH process. 
There are also inherent challenges when considering the use of THz for EH in UAV-RHS systems.
From a materials science perspective \cite{zhou2022patterned}, THz waves are readily absorbed by materials, particularly biological substances that resonate at THz frequencies. This effect is especially significant in polar molecules, e.g., water, where THz radiation induces dipole moments, enhancing absorption. Consequently, the high absorptivity of THz waves poses challenges for their use in wireless communication. 
We hypothesize that if RHS for EH at THz frequencies is developed in the future, it could be made from materials that leverage this absorptive property of THz waves to enhance EH efficiency. To our knowledge, this has not yet been done.


Although there have been advancements~\cite{kumar2024maximizing, 9771999, 10274676, 10399860, 9635669} considering EH in UAV networks, none have focused on RHS for this purpose. Similarly, works~\cite{sheemar2025joint, 9696209, 9848831} have considered RHS for various applications, but not for EH. Table~\ref{table:lit} presents a comparative overview of recent UAV communication studies assisted by RIS and RHS. It outlines whether EH and RHS are incorporated, summarizes the main objectives, lists key optimization variables, and details the algorithms adopted in each work. This overview helps contextualize the unique contributions and methodologies of the current work in relation to the existing literature. In this paper, we harness the absorptive properties of the THz spectrum by equipping a miniature UAV with RHS for EH. This strategy extends battery life during data transmission and introduces a novel cooperative communication framework for air-to-ground transmission.


\section{System Model and Problem formulation}\label{section_3}
We study a downlink NOMA transmission scenario within a miniature UAV-assisted RHS cooperative framework. The cooperative communication unfolds over two episodes. In the first episode, the source node employs NOMA to simultaneously transmit to both the destination and the miniature UAV. During this episode, the UAV performs EH via the RHS while decoding the source's information, and the destination directly receives its data. In the second episode, the UAV acts as an aerial relay, forwarding the decoded data to the destination using the energy harvested in the first episode.

In Fig. \ref{fig1}, the source node communicates with two terminals: a miniature UAV and a receiver destination node.  
The UAV serves as an EH-RHS to guarantee the high rate requirement of the destination node.  
A 3D coordinate system is utilized, where the source and destination are positioned at 
$\boldsymbol{s}(t) = {[s_{x}(t),s_{y} (t),H_s]}^\text{T} \in {\mathbb{R}^{3 \times 1}}$ and $\boldsymbol{d}(t) = {[d_{x}(t), d_{y}(t),H_d]^\text{T}} \in {\mathbb{R}^{3 \times 1}}$, respectively.
The destination node remains static on the ground, while the altitude of UAV with RHS and the source maintain fixed altitudes, though they differ from one another, i.e., $H_u=H_r$ (altitude of RHS) and $H_s$. 
At any given time $0 < t < T$, the UAV’s instantaneous position is represented as $\boldsymbol{u}(t) = {[x(t),y(t),H_u]^\text{T}} \in {\mathbb{R}^{3 \times 1}}$. 
Moreover, the coordinates of the RHS-equipped UAV are expressed by $\boldsymbol{r}(t,m) = {[x(t,m),y(t,m),H_u]^\text{T}} \in {\mathbb{R}^{3 \times 1}}$, where $m=\{1,\dots,M\}$ refers to the index of each holographic element.  
The total flight time of the UAV, denoted as $T$, is divided into $N$ equal time slots,  with the trajectory at each time slot denoted as $\boldsymbol{u}[n], \forall n \in \{ 1,...,N\}$. 
Each slot is small enough to treat the UAV position as nearly constant.  
The UAV's position and speed are subject to the following constraints:
\begin{subequations} 
\begin{align}
& ~~\boldsymbol{u}[1] = {\boldsymbol{u}_s}, 
\label{1a}\\
& ~~\boldsymbol{u}[N + 1] = {\boldsymbol{u}_e}, 
\label{1b}\\
& \left\| {\boldsymbol{u}[n + 1] - \boldsymbol{u}[n]} \right\| \le \Delta_t {V_{\max}} ,\; \forall n,
\label{1c}
\end{align}
\end{subequations}
where $V_{\max}$ is the maximum allowable speed, $\Delta_t$ denotes the length of each time slot, and $\boldsymbol{u}_s$ and $\boldsymbol{u}_e$ represent the UAV’s starting and ending positions, respectively.
The channel coefficients for source-to-UAV and UAV-to-destination are $h_{su}[n]$ and $h_{ud}[n]$, which adhere to the free-space path loss model and are expressed as:
\begin{equation}\label{h11}
h_{su}[n]= 
\frac{\fgeeszett _0}
{{{\left\| {\boldsymbol{u}[n] - \boldsymbol{s}[n]} \right\|}}}
e^{-\frac{\xi(f)}{2}{\left\| {\boldsymbol{u}[n] - \boldsymbol{s}[n]} \right\|}},\forall n,
\end{equation}
\begin{equation}\label{h12}
h_{ud}[n] = 
\frac{\fgeeszett_0}{\left\|\boldsymbol{u}[n] - \boldsymbol{d}[n] \right\|}
e^{-\frac{\xi(f)}{2}{\left\| {\boldsymbol{u}[n] - \boldsymbol{d}[n]} \right\|}},\forall n,
\end{equation}
and the channel gain between the source-to-RHS is given by: 
\begin{equation}\label{h13}
\hspace{-2mm}
g_{sr}[n,m] = 
\frac{\fgeeszett _0}{\left\|\boldsymbol{r}[n,m] - \boldsymbol{s}[n] \right\|}
e^{-\frac{\xi(f)}{2}{\left\| {\boldsymbol{r}[n,m] - \boldsymbol{s}[n]} \right\|}},\forall n,m.
\end{equation}
The THz path loss is represented by the exponential term, where $\xi(f)$ is the molecular absorption coefficient, influenced by frequency $f$ and water vapor concentration~\cite{9978646}. For simplicity, we denote it as $\xi$, fixing the $f$. The reference power gain $\fgeeszett_{0} = c/{4 \pi f}$, with $c$ as the speed of light~\cite{9978646}. The channel power gain $h_{sd}[n]$ between the source and destination follows a similar structure as in Eq.~\eqref{h11} and Eq.~\eqref{h12}~\cite{9678373}.
\subsection{Episode One: Direct Transmission and EH with RHS}
In this episode, the source sends information to both the miniature UAV and the destination node using power-domain NOMA.
The UAV, equipped with an RHS, acts as an EH user in this episode.
The radio frequency (RF) source transmitted signal is: 
\begin{equation}\label{transmitted}
s[n] =\sqrt{\fgelb_{1}[n]} s_{1}[n] + \sqrt{\fgelb_{2}[n]} s_{2}[n],\forall n,
\end{equation}
where ${s_1}[n]$ and ${s_2}[n]$ represent the symbols transmitted in each time slot, modeled as independent circularly symmetric complex Gaussian (CSCG) variables with zero mean and unit variance.
Furthermore, $\sqrt{\fgelb_{1}[n]}$ and $\sqrt{\fgelb_{2}[n]}$ correspond to the power allocation coefficients for NOMA in the $n$-th time slot, subject to the following constraints:
\begin{subequations}
\begin{align}
&{\fgelb_{1}[n]}+ {\fgelb_{2}[n]}\leq \fged_{\rm{peak}},\forall n,\label{power1}\\
&\frac{1}{N}\sum\limits_{n = 1}^N {\fgelb_{1}[n]}+ {\fgelb_{2}[n]}\leq \fged_{\max},
\label{power2}
\end{align}
\end{subequations}
where $\fged_{\rm{peak}}$ is the maximum power the source can transmit in any time slot, and $\fged_{\max}$ is the total power constraint across all time slots.
The signal received by the information decoding (ID) antenna and the absorptive EH RHS elements on the miniature UAV from the RF source can be expressed as: 
\begin{align}
y_{\text{ID}}^{(1)}[n] &= 
\sqrt {\varsigma_1[n]}  
h_{su}[n]s[n] + z_1^{(1)}[n],\forall n,
\label{recieved.UAV}\\
y_{\text{EH}}^{(1)}[n] &= 
\sqrt {\varsigma_2[n]}  
\sum\limits_{m=1}^M
g_{sr}[n,m]\textdotbreve{a}[m]e^{j\omega[m]}s[n] 
\nonumber\\ 
&+ z_2^{(1)}[n],\forall n,
\label{recieved.UAV.EH}
\end{align}
where $z_1^{(1)}[n] \sim \mathcal{N}(0,\,\epsilon_{1}^{2})\,$ and ${z^{(1)}_2}[n] \sim \mathcal{N}(0,\,\epsilon_{2}^{2})\,$ 
represent the CSCG noise at the UAV’s ID antenna and the EH RHS, respectively.
Besides, $0<\varsigma_1[n],\varsigma_2[n]<1$ are the received ID and EH power factors. 
The parameters $\textdotbreve{a}[m]$ and $\omega[m]$ denote the absorption coefficient and phase shift applied by the $m$-th element of the RHS.
\begin{remark} 
\textcolor{black}{The absorption coefficients on the RHS are assumed to be uniform, i.e., $\textdotbreve{a}[m] = \textdotbreve{a}_M, \forall m$. Nonlinearity and hardware impairments are not considered in this analysis. Furthermore, no phase shift optimization is performed at the RHS; instead, all phase shifts are uniformly set as $\omega[m] = \omega_M, \forall m$. This is because the RHS in our system is employed exclusively for energy harvesting, where the harvested power depends primarily on the absorption coefficient and the impinging RF power density rather than on coherent phase alignment. In contrast, phase optimization is most relevant when RHS elements are used for beamforming or reflection control in the communication link, which falls outside the scope of this work.} 
\end{remark}

The miniature UAV utilizes successive interference cancellation (SIC) to decode the incoming signals. Specifically, it first decodes the destination node’s data and then subtracts it from the received signal to retrieve its own data. The signal-to-interference-plus-noise ratio (SINR) at the UAV for detecting $s_{2}[n]$ is given by: 
\begin{equation}
\label{SINR1.UAV}
\fgef_{d \leftarrow  u}^{(1)}[n] = 
\frac{{\fgelb_{2}[n]}|h_{su}[n]|^2}
{{\fgelb_{1}[n]}|h_{su}[n]|^2 + \epsilon_{1}^{2}[n]/\varsigma_1[n]},
\forall n.
\end{equation}
Next, the SINR for decoding the miniature UAV's own data is expressed as:
\begin{equation}\label{SINR2.UAV}
\fgef _r^{(1)}[n] = \frac{\varsigma_1[n]{\fgelb_{1}[n]}|h_{su}[n]|^2}{\epsilon_{1}^{2}[n]},\forall n.
\end{equation}
Based on Eq.~\eqref{recieved.UAV} and Eq.~\eqref{recieved.UAV.EH}, the RF power harvested by the EH RHS of the miniature UAV, neglecting the noise power, can be expressed as~\cite{10348506}:
\begin{equation}\label{E.UAV}
\mathcal{E}[n] = \eta  \textdotbreve{a}_M e^{2j\omega_M} \mathcal{T} [n] 
\varsigma_2[n]
{\left|\sum\limits_{m=1}^Mg_{sr}[n,m]\right|^2},
\forall n,
\end{equation}
where $\eta \in (0,1]$ is the energy conversion efficiency, and $\mathcal{T}[n]$ represents the transmission time fraction for the first episode within the $n$-th time slot, assuming equal transmission durations for both episodes. 
Therefore, the UAV's transmit power in the second episode, empowered by the EH RHS, can be written as:
\begin{equation}\label{Pt.UAV}
{\fged_{\rm{RHS}}}[n] = \frac{{\mathcal{E}[n]}}{{1-\mathcal{T} [n]}},\forall n.
\end{equation}
The received signal at the destination is given by:
\begin{equation}\label{recieved from s.d}
y_d^{(1)}[n] =
h_{sd}[n]s[n]+ {\varrho_1}^{(1)}[n],\forall n,
\end{equation}
where ${\varrho_1}^{(1)}[n] \sim \mathcal{N}(0,\,\varepsilon_{1}^2[n])$ is the received noise at the destination node during the first episode.
The SINR at the destination then becomes: 
\begin{equation}\label{SINR phase 1.d}
\fgef _d^{(1)}[n] =  \frac{ {\fgelb_{2}[n]\;|h_{sd}[n]|^2}}{ {\fgelb_{1}[n]\;|h_{sd}[n]|^2} + \varepsilon_{1}^2[n]},\forall n.
\end{equation}
\subsection{Episode Two: Cooperative Transmission}
In this episode, the UAV utilizes the power harvested by the RHS, Eq.~\eqref{Pt.UAV}, to relay the destination node’s data. 
Consequently, the signal received at the destination node is:
\begin{equation}\label{recieved from r.d}
y_d^{(2)}[n] = \sqrt {{\fged_{\rm{RHS}}}[n]}  {h_{ud}}[n]{s_2}[n] +  {\varrho_2}^{(2)}[n],\forall n,
\end{equation}
where ${\varrho_2}^{(2)}[n] \sim \mathcal{N}(0,\,\varepsilon_{2}^2[n])$ is the noise at the destination node. The corresponding SINR is given by: 
\begin{equation}\label{SINR phase 2.d}
\fgef _d^{(2)}[n] = 
\frac{\eta  
\textdotbreve{a}^2_M
e^{2j\omega_M}
\varsigma_2[n]
{\left|\sum\limits_{m=1}^M
g_{sr}[n,m]\right|^2}
{|h_{ud}[n]|^2}}
{\varepsilon_{2}^2[n]},\forall n.
\end{equation}
Finally, the destination node applies maximal ratio combining (MRC) to integrate the signals received in both episodes. The overall SINR can be expressed as: 
\begin{equation}\label{SINR phase 3.d}
\begin{split}
&\fgef _d^{\rm{MRC}}[n] = 
\fgef _d^{(1)}[n]+\fgef _d^{(2)}[n],\forall n.
\end{split}
\end{equation}
\subsection{Resource Allocation Problem Formulation}
We begin by defining the network's EE as the ratio of the total sum rate to the total power consumed by the network. 
Mathematically, this is represented as  
$\eta_{EE}[n]=\frac{R_{\text{sum}}[n]}{\fged_{\text{sum}}[n]}$,
where $R_{\text{sum}}[n]=\log_{2}(1+\fgef _{r}^1[n])+\log_{2}(1+\fgef _{d}^{\text{MRC}}[n])$. 
Assuming a constant power consumption for the miniature UAV’s flight, $\fged_c$, the total transmission power of the system can be written as: $\fged_{\text{sum}}[n]={\fgelb_{1}[n]}+{\fgelb_{2}[n]}+\fged_c-\fged_{\rm{RHS}}[n]$.
To maximize the EE by optimizing the NOMA power allocation coefficients and the UAV's trajectory, we formulate the following optimization problem: 
\begin{subequations}
\begin{align}
&~\text{P}_1: \underset{\fgelb_{1}[n],\fgelb_{2}[n],\boldsymbol{u}[n]}{\max} \: \:\sum\nolimits_{n = 1}^N { \eta_{EE}[n]}
\tag{18}\\
&s.t.: ~
\frac{1}{N}\sum\nolimits_{n = 1}^N 
\fged_{\rm{RHS}}[n]  \ge 
\frac{1}{N}\sum\nolimits_{n = 1}^N 
\fged[n] ,  
\label{p1_a}\\
&~\quad\quad \fgef_{d \leftarrow  u}^{(1)}~[n] 
\ge 
{\fgef_{\min}}[n],\:  \forall n,
\label{p1_b}\\
&~\quad\quad 
\fgef_d^{\rm{MRC}}[n] 
\ge 
{\gamma_{\min }}[n],\:  \forall n,
\label{p1_c}\\
&~\quad\quad \fged[n] \geq 0,\: \forall n,
\label{p1_e}\\
&~\quad\quad 
\eqref{1a}-\eqref{1c}, \eqref{power1}, \eqref{power2}.
\nonumber
\end{align} 
\end{subequations}
The constraint Eq.~\eqref{p1_a} ensures that the power harvested by the EH RHS of the miniature UAV over all time slots is greater than the minimum required harvested power $\fged[n] = \frac{\mathcal{E}[n]}{\mathcal{T}[n]}$ (where $\fged[n] = \fged_{\rm{EH}}$, representing the harvested power). 
Constraint Eq.~\eqref{p1_b} guarantees successful decoding of the destination node’s data at the UAV, with the SINR exceeding the threshold ${\fgef_{\min}}[n]$, while Eq.~\eqref{p1_c} enforces that the destination node’s SINR remains above the minimum requirement ${\gamma_{\min}}[n]$, where $\gamma_{\min} \geq \fgef_{\min}$. Finally, Eq.~\eqref{p1_e} ensures that the UAV's transmitted power is feasible and non-negative.
\section{A Two-Step Sequential Approach to Solving the EE Optimization Problem}\label{section_4}
The optimization problem $\text{P}_1$ is NP-hard and non-convex due to the interdependence among the optimization variables. Additionally, the objective function in $\text{P}_1$ is a sum of ratios, which makes traditional Dinkelbach method approaches unsuitable~\cite{10100913}. To address this, we propose a two-step approach that separates the optimization process, allowing each variable to be optimized independently.
\subsection{Step One: Optimizing EH RHS Miniature UAV Trajectory} 
In this step, the trajectory of the miniature UAV is optimized while the NOMA power allocation coefficients remain fixed. The sum rate function remains non-convex due to the coupling of the optimization variables. However, to address this, the non-linear fractional objective function is first transformed into a subtractive form~\cite{Jong2012AnEG}.\\
\indent\textbf{Theorem}~\cite{Jong2012AnEG}:
Let $\boldsymbol{u}^\ast[n]$ be the optimal solution to $\text{P}_1$. Then, given the existence of two vectors,
$\boldsymbol{\alpha} = [\alpha_1^\ast, \ldots, \alpha_N^\ast]^T$ and $\boldsymbol{\beta} = [\beta_1^\ast, \ldots, \beta_N^\ast]^T$, the following optimization problem provides an optimal solution as follows:
\begin{align}
&\text{P}_2: 
\label{max}
&\underset{\boldsymbol{u}[n]}{\max} \: \:\sum\nolimits_{n = 1}^N {\alpha_n^\ast\big[R_\text{sum}[n]-\beta_n^\ast(\fged_\text{sum}[n])\big]}.
\end{align}
Moreover, $\boldsymbol{u}^\ast[n]$ must satisfy the following conditions: 
\begin{align}
&R_\text{sum}^\ast[n]-\beta_n^\ast(\fged_\text{sum}[n])=0,\forall n,\label{conn1}\\
&1-{\alpha_n^\ast(\fged_\text{sum}[n])}=0,\forall n.\label{conn2}
\end{align}
The equivalent subtractive form in Eq.~\eqref{max}, using the additional parameters ${\boldsymbol{\alpha}^\ast,\boldsymbol{\beta}^\ast}$, shares the same optimal solution as $\text{P}_1$ for fixed values of $\fgelb_1[n]$ and $\fgelb_2[n]$. Specifically, Eq.~\eqref{max} can be solved iteratively using a two-layer approach consisting of inner and outer layers. In the inner layer, Eq.~\eqref{max} is solved with fixed values of $\boldsymbol{\alpha}$ and $\boldsymbol{\beta}$. Then, Eq.~\eqref{conn1} and Eq.\ref{conn2} are updated in the outer layer to find the optimal $\{\boldsymbol{\alpha}^\ast,\boldsymbol{\beta}^\ast\}$.
\textcolor{black}{
\begin{proposition}\label{prp_1}
Problems $\text{P}_1$ and $\text{P}_2$ are equivalent, since for the optimal solution $\boldsymbol{u}^\star[n]$ there exist vectors $\boldsymbol{\alpha}^\star$ and $\boldsymbol{\beta}^\star$ satisfying the conditions in \eqref{conn1}–\eqref{conn2}, which ensure that both problems attain the same optimal value and solution. 
\end{proposition}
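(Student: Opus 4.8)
The plan is to treat Proposition~\ref{prp_1} as the specialization of the generalized parametric (Dinkelbach-type) theorem for sums of ratios of Jong to the present setting, and to organize the argument around an exact lifting of $\text{P}_1$ followed by a first-order analysis. For fixed NOMA coefficients $\fgelb_1[n]$ and $\fgelb_2[n]$, the objective of $\text{P}_1$ is the separable sum of ratios $\sum_{n=1}^N R_\text{sum}[n]/\fged_\text{sum}[n]$, so I would first record the standing requirement that each denominator $\fged_\text{sum}[n]=\fgelb_1[n]+\fgelb_2[n]+\fged_c-\fged_{\rm RHS}[n]$ stays strictly positive on the feasible trajectory set; this is what makes the ratios, and the multipliers introduced below, well defined.

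Next I would lift $\text{P}_1$ by introducing one auxiliary scalar $\beta_n$ per slot together with the linking constraint $R_\text{sum}[n]-\beta_n\fged_\text{sum}[n]=0$, so that the objective becomes $\sum_n\beta_n$; because the denominators are positive, this reformulation is lossless and shares the optimizer of $\text{P}_1$. Forming its Lagrangian with multipliers $\alpha_n$ attached to the linking constraints, I would read off the stationarity system: differentiating in $\beta_n$ gives $1-\alpha_n\fged_\text{sum}[n]=0$, i.e.\ condition \eqref{conn2}; primal feasibility of the linking constraints is condition \eqref{conn1}; and stationarity in $\boldsymbol u[n]$ is exactly the optimality condition of the decoupled subtractive objective in \eqref{max}, namely $\text{P}_2$. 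This identification is the heart of the equivalence: $\text{P}_2$ is precisely the $\boldsymbol u$-subproblem of the lifted Lagrangian once $\boldsymbol\alpha$ and $\boldsymbol\beta$ are fixed by \eqref{conn1}--\eqref{conn2}.

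I would then prove the two directions. For necessity, given an optimal $\boldsymbol u^\star[n]$ of $\text{P}_1$ I construct $\alpha_n^\star=1/\fged_\text{sum}[n]$ and $\beta_n^\star=R_\text{sum}^\star[n]/\fged_\text{sum}[n]$, observe that \eqref{conn1}--\eqref{conn2} hold by construction, and note that these make the $\text{P}_2$ objective vanish at $\boldsymbol u^\star$ while the reconstructed value $\sum_n\beta_n^\star$ equals the optimal $\text{P}_1$ value. For sufficiency, starting from a point satisfying \eqref{conn1}--\eqref{conn2} and solving $\text{P}_2$, I use \eqref{conn1} to force the optimal $\text{P}_2$ value to zero and \eqref{conn2} to recover the per-slot ratios, concluding that the same $\boldsymbol u^\star$ and the same objective value solve $\text{P}_1$. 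In the paper this two-directional link is realized operationally by the inner/outer scheme: the inner layer solves $\text{P}_2$ for frozen $\boldsymbol\alpha,\boldsymbol\beta$, and the outer layer updates them through \eqref{conn1}--\eqref{conn2}.

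The step I expect to be the main obstacle is upgrading this first-order/KKT characterization to the genuine equivalence of optimal values and optimizers that the proposition asserts. The difficulty is that, unlike the single-ratio Dinkelbach case, fixing a common set of weights $\alpha_n^\star$ does not let one divide out the denominators term by term, so a one-shot maximization of $\text{P}_2$ need not reproduce $\text{P}_1$; the equivalence only holds at the fixed point of the parameter updates. Closing this gap requires the generalized-Dinkelbach monotonicity argument---showing the parametric optimal value depends monotonically on $\boldsymbol\beta$ and possesses a unique root where the $\text{P}_2$ value is zero---together with the positivity of $\fged_\text{sum}[n]$ that guarantees convergence of the outer-layer Newton-type updates to that root. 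I would therefore devote the bulk of the proof to this monotonicity-and-root analysis, the remaining verifications being routine.
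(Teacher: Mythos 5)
Your proposal follows essentially the same route as the paper's Appendix~\ref{Appen_A}: both instantiate Jong's parametric transform of the sum-of-ratios objective, arriving at the same multipliers $\alpha_n^\star = 1/\fged_\text{sum}[n]$ and $\beta_n^\star = R_\text{sum}^\ast[n]/\fged_\text{sum}[n]$ and the complementary conditions \eqref{conn1}--\eqref{conn2}. The only difference is one of detail: you derive the lifting and the Lagrangian/KKT system explicitly and correctly flag that the sufficiency direction (showing a $\text{P}_2$ maximizer at the fixed point actually solves $\text{P}_1$, which does not follow by dividing out denominators term by term as in single-ratio Dinkelbach) is the nontrivial step, whereas the paper delegates both directions to the citation of [33].
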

\begin{proof}
See Appendix~\ref{Appen_A}.
\QED
\end{proof}
}

\subsubsection{\textbf{Inner-layer}}
Here, we optimize the trajectory based on the optimal NOMA power allocation coefficients as follows
\begin{subequations}
\label{p3}
\begin{align}
& \text{P}_3: 
\underset{\boldsymbol{u}[n]}{\max} 
\: \:
\sum\nolimits_{n = 1}^N 
\alpha_n^\ast\big[R_\text{sum}[n]-\beta_n^\ast(\fged_\text{sum}[n])\big]
\tag{22}\\
&s.t.:  
\sum\limits_{n = 1}^N
\frac{\fgec_1 e^{-\xi
\left(\left\| {\boldsymbol{u}[n] -\boldsymbol{s}[n]} \right\|\right)
}}
{\left\| {\boldsymbol{u}[n] - \boldsymbol{s}[n]} \right\|^2}  
\ge
\sum_{n = 1}^N {\fged[n]} ,  \\
&\frac{\fgelb_2[n]}{\fgelb_1[n] + \fgec_2{{\left\| {\boldsymbol{u}[n] - \mathbf{s}[n]} \right\|}^2
e^{\xi
\left(\left\| {\boldsymbol{u}[n] - \boldsymbol{s}[n]} \right\|\right)}}
}
\ge {\fgef _{\min }}[n],\:  \forall n,\\
&
\frac{\fgec_1
\fgeeszett_{0}^2}{\varepsilon_{2}^2[n]}
\cdot
\frac{e^{-\xi\left(
{\left\|\boldsymbol{u}[n] - \boldsymbol{s}[n]\right\|+{\left\| {\boldsymbol{u}[n] - \boldsymbol{d}[n]} \right\|}}
\right)}}
{\left\| {\boldsymbol{u}[n] - \boldsymbol{s}[n]} \right\|^2\left\| {\boldsymbol{u}[n] - \boldsymbol{d}[n]} \right\|^2}\\
&\quad \quad   
+\frac{ {\fgelb_{2}[n]\;|h_{sd}[n]|^2}}{ {\fgelb_{1}[n]\;|h_{sd}[n]|^2} + \varepsilon_{1}^2[n]}
\ge {\gamma_{\min }}[n],\:  \forall n,
\nonumber
\\
&\quad \quad 
\eqref{1a}-\eqref{1c},\eqref{p1_e}, 
\nonumber
\end{align}
\end{subequations}
where  
$\fgec_1=\eta
M \textdotbreve{a}^2_M  e^{2j\omega_M}
\varsigma_2[n]
\fgeeszett_{0}^2$
and
$\fgec_2=\frac{\epsilon_{1}^{2}[n]}{\varsigma_1[n]\fgeeszett_{0}^2}$. 
The optimization problem $\text{P}_3$ remains non-convex. Therefore, $\text{P}_3$ is reformulated into an equivalent form by introducing slack optimization variables, $(a[n],b[n],c[n],d[n])$, as follows:
\begin{subequations}
\begin{align}
& \text{P}_4: 
\underset{\boldsymbol{u}[n],a[n],b[n],c[n],d[n]}{\max} \: \:\sum\nolimits_{n = 1}^N 
{\alpha_n^\ast\big[R_\text{sum}[n]-\beta_n^\ast(\fged_\text{sum}[n])\big]}
\tag{23}\\
&s.t.: \quad \sum\nolimits_{n = 1}^N 
\frac{\fgec_2}{e^{c[n]}}  
\ge 
\sum\nolimits_{n = 1}^N {\fged[n]},\\  
&\quad\quad \frac{{{\fgelb_2}[n]}}{{{\fgelb_1}[n] + \fgec_2e^{c[n]}}}\ge {\fgef _{\min }}[n],\:  \forall n,
\\
&\quad\quad
\frac{\fgelb_{2}[n]\;|h_{sd}[n]|^2}
{\fgelb_{1}[n]\;|h_{sd}[n]|^2+ 
\varepsilon_{1}^2[n]}+\dots
\nonumber\\
&\quad\quad\quad\quad\quad\quad
+\frac{\fgec_1\fgeeszett_{0}^2}{\varepsilon_{2}^2[n]{e^{c[n]+d[n]}}}\ge {\gamma_{\min}}[n],\forall n,
\\
& \quad\quad
a[n]
\leq 
\frac{\left\| {\boldsymbol{u}[n] - \boldsymbol{s}[n]} \right\|^2}
{e^{-\xi\left\| {\boldsymbol{u}[n] - \boldsymbol{s}[n]} \right|}}, \forall n,
\\
& \quad\quad
b[n]
\leq
\frac{\left\|{\boldsymbol{u}[n] - \boldsymbol{d}[n]} \right\|^2}
{e^{-\xi\left\|{\boldsymbol{u}[n] - \boldsymbol{d}[n]} \right\|}}, \forall n,
\\ 
& \quad\quad
a[n]\leq e^{c[n]},  \forall n,
\\
& \quad\quad b[n]\leq e^{d[n]}, \forall n,
\\&~\quad\quad \nonumber
\eqref{1a}-\eqref{1c},\eqref{p1_e}, 
\end{align}
\end{subequations}
where 
\begin{align}
R_\text{sum}[n]&=\log_{2}
\left(1+
\frac{\fgec_2\fgelb_1[n]}{e^{c[n]}}\right)
\\
&
+\log_{2}
\left(
1+\fgef_d^{(1)}[n]+ 
\left(
\frac{\fgec_1\fgeeszett_{0}^2}{\varepsilon_{2}^2[n]}\cdot\frac{1}{e^{c[n]+d[n]}}\right)
\right).\nonumber
\end{align}
Using these transformations, the main objective function and constraints become convex but still intractable. Therefore, successive convex approximation (SCA) using first-order Taylor expansions is applied to approximate $\text{P}_4$ as convex functions. The first-order lower bounds are given by: 
\begin{align}
&e^{c[n]}\geq 
e^{c^{(k)}[n]}
(1+{c[n]}-{c^{(k)}[n]})
\buildrel \Delta \over = 
\tilde{e}^{c[n]},\forall n,\\
&e^{d[n]}\geq 
e^{d^{(k)}[n]}
(1+\ {d[n]}-{d^{(k)}[n]})
\buildrel \Delta \over = 
\tilde{e}^{d[n]},\forall n,
\end{align} 
\begin{figure*}[t!]
\centering
\begin{align}  
L_{\mathfrak{z}}(\fgelb_{1}[n],\fgelb_{2}[n],\boldsymbol{\mathfrak{a}},\boldsymbol{\mathfrak{b}},\boldsymbol{\Upsilon},\boldsymbol{\mathfrak{c}},\boldsymbol{\mathfrak{d}})
&=
\sum\limits_{n = 1}^N \varpi[n]{\fged_\text{sum}^2[n]} + 
\frac{1}{2\mathfrak{z}}
\Bigg[
\bigg(
\bigg[
\sum\limits_{n = 1}^N\mathfrak{a}_n
+\mathfrak{z} \Big(
\frac{\epsilon_{1}^{2}[n]}{\varsigma_1[n]}
-
\frac{{\fgelb_{2}[n]}|h_{su}[n]|^2}{{\fgef_{\min}}[n]}+{\fgelb_{1}[n]}|h_{su}[n]|^2 
\Big)\bigg]^+
\bigg)^2\nonumber
\\
&+
\sum\limits_{n = 1}^N 
 \frac{1}{4\varpi[n]\hat{R}_\text{sum}^2[n]}
+
\bigg(
\bigg[
\sum\limits_{n = 1}^N \mathfrak{b}_n
+
\mathfrak{z}(
\varepsilon_{1}^2[n]\chi[n]
-
{\fgelb_{2}[n]}|h_{sd}[n]|^2
+
{\fgelb_{1}[n]}|h_{sd}[n]|^2\chi[n]   
)\bigg]^{+}\bigg)^{2}
\nonumber\\\nonumber
&+
\bigg(\bigg[
\sum\limits_{n = 1}^N \Upsilon_n
+
\mathfrak{z}({ \fgelb_{1}[n] } + { \fgelb_{2}[n] }- \fged_{\rm{peak}})\bigg]^{+}\bigg)^{2}\nonumber
\end{align}
\begin{align}\label{lagrange_aug}
~&+
\bigg(
\bigg[ 
\mathfrak{c}_n
+\mathfrak{z}(\frac{1}{N}\sum\limits_{n = 1}^N { \fgelb_{1}[n] } 
+\fgelb _2[n] -\fged_{\max})\bigg]^{+}\bigg)^{2}
\nonumber\\
&+
\bigg(\bigg[ \sum\limits_{n = 1}^N\mathfrak{d}_n - \mathfrak{z} \fged[n] \bigg]^{+}\bigg)^{2}
-\bigg(
\sum\limits_{n = 1}^N 
\mathfrak{a}_n^2+
\mathfrak{b}_n^2 +
\Upsilon_n^2+
\mathfrak{c}_n^2 +
\mathfrak{d}_n^2\bigg)
\Bigg],\tag{42}
\end{align}
\medskip
\hrule
\end{figure*}
\begin{align}
\frac{\left\| {\boldsymbol{u}[n] - \boldsymbol{s}[n]} \right\|^2}
{e^{-\xi\left\| {\boldsymbol{u}[n] - \boldsymbol{s}[n]} \right\|}}
&\geq 
\frac{\left\| \boldsymbol{u}^{(k)}[n] - \boldsymbol{s}[n] \right\|^2}{e^{-\xi\left\| \boldsymbol{u}^{(k)}[n] - \boldsymbol{s}[n] \right\|}}
\nonumber
\\&+
(2+\xi||\boldsymbol{u}^{(k)}[n] - \boldsymbol{s}[n]||)
\cdot
\end{align} 
\begin{align}\hspace{-2mm}
\frac{(\boldsymbol{u}^{(k)}[n]-\boldsymbol{s}[n])^T
(\boldsymbol{u}[n]-{{\boldsymbol{u}}^{(k)}}[n])}{e^{-\xi\left\| \boldsymbol{u}^{(k)}[n] - \boldsymbol{s}[n] \right\|}}
\buildrel 
\Delta \over = 
\frac{\left\| \tilde{\boldsymbol{u}}[n] - \boldsymbol{s}[n] \right\|^2}{e^{-\xi\left\|\tilde{\boldsymbol{u}}[n]-\boldsymbol{s}[n]\right\|}},  \forall n,
\label{taylor1} 
\end{align} 
\begin{align}
\frac{\left\| {\boldsymbol{u}[n] - \boldsymbol{d}[n]} \right\|^2}{e^{-\xi \left\| {\boldsymbol{u}[n] - \boldsymbol{d}[n]} \right\|^2}}
&\geq 
\frac{\left\| \boldsymbol{u}^{(k)}[n] - \boldsymbol{d}[n] \right\|^2}{e^{-\xi \left\| \boldsymbol{u}^{(k)}[n] - \boldsymbol{d}[n] \right\|}}
\nonumber
\\&+
(2+\xi || \boldsymbol{u}^{(k)}[n] - \boldsymbol{d}[n] ||)
\cdot
\end{align} 
\begin{align}\hspace{-3mm}
\frac{(\boldsymbol{u}^{(k)}[n]-\boldsymbol{d}[n])^T(\boldsymbol{u}[n]-{{\boldsymbol{u}}^{(k)}}[n])}{e^{-\xi \left\| \boldsymbol{u}^{(k)}[n] - \boldsymbol{d}[n] \right\|}}
\buildrel \Delta \over = 
\frac{\left\| {\tilde{\boldsymbol{u}}[n] - \boldsymbol{d}[n]} \right\|^2}{e^{-\xi \left\| {\tilde{\boldsymbol{u}}[n] - \boldsymbol{d}[n]} \right\|^2}},  \forall n,
\label{taylor2}
\end{align}    
where 
$e^{c^{(k)}[n]}$ 
and 
$e^{d^{(k)}[n]}$
represent the Taylor expansion points at iteration $k$.
With this transformation, $\text{P}_4$'s approximation becomes:
\begin{subequations}
\label{p5}
\begin{align}
& \text{P}_5: 
\underset{\boldsymbol{u}[n],a[n],b[n],c[n],d[n]}{\max} 
\sum\nolimits_{n = 1}^N \alpha_n^\ast\big[\tilde{R}_\text{sum}[n]-\beta_n^\ast(\fged_\text{sum}[n])\big]
\tag{31}\\
&s.t.: \quad \sum\nolimits_{n = 1}^N 
\frac{\fgec_1}{\tilde{e}^{c[n]}}
\ge \sum\nolimits_{n = 1}^N {\fged[n]} ,  \\
&\quad\quad \frac{{{\fgelb _2}[n]}}{{{\fgelb _1}[n] + \fgec_2\tilde{e}^{c[n]}}}\ge {\fgef _{\min }}[n],\:  \forall n\\
&\quad\quad
\frac{ {\fgelb_{2}[n]\;|h_{sd}[n]|^2}}{ {\fgelb_{1}[n]|h_{sd}[n]|^2} + \varepsilon_{1}^2[n]}+
\frac{\fgec_1\fgeeszett_{0}^2}
{\varepsilon_{2}^2[n]{\tilde{e}^{c[n]+d[n]}}}
\ge 
{\gamma_{\min}}[n],\forall n,\\
&\quad\quad
a[n]
\leq 
\frac{\left\| {\tilde{\boldsymbol{u}}[n] - \boldsymbol{s}[n]} \right\|^2}{e^{-\xi\left\| {\tilde{\boldsymbol{u}}[n] - \boldsymbol{s}[n]} \right\|}},\forall n,\\
&\quad\quad
b[n]
\leq 
\frac{\left\| {\tilde{\boldsymbol{u}}[n] - \boldsymbol{d}[n]} \right\|^2}{e^{-\xi\left\| {\tilde{\boldsymbol{u}}[n] - \boldsymbol{d}[n]} \right\|}},\forall n,
\\
&\quad  \quad 
a[n]
\leq \tilde{e}^{c[n]}, \forall n,
\\
&\quad\quad 
b[n]
\leq 
\tilde{e}^{d[n]},\forall n,
\\&~\quad\quad \nonumber
\eqref{1a}-\eqref{1c},\eqref{p1_e}, 
\end{align}
\end{subequations}
where $\tilde{R}_\text{sum}[n]={R}_\text{sum}[n]|_{{e}^{c[n]}=\tilde{e}^{c[n]},{e}^{d[n]}=\tilde{e}^{d[n]}}$. 
Optimization solvers can be employed to find a solution for $\text{P}_5$\cite{10100913}. 
\subsubsection{\textbf{Outer-layer}}
The damped Newton method is applied to find the optimal values for
$\{\boldsymbol{\alpha},\boldsymbol{\beta}\}$. 
Let $\theta_n(\beta_n)=R_\text{sum}^\ast[n]-\beta_n^\ast(\fged_\text{sum}[n])$ and 
$\theta_{N+j}(\alpha_j)=1-
{\alpha_j^\ast(\fged_\text{sum}{[j]})}$, $j\in \{1,...,N\}$. 
As shown in \cite{10059879}, 
the solution 
$\{\boldsymbol{\alpha}^\ast,\boldsymbol{\beta}^\ast\}$ is optimal if and only if $\theta(\boldsymbol{\alpha},\boldsymbol{\beta})=[\theta_1,\theta_2,...,\theta_{2N}]^T=0$. 
The updated values of $\boldsymbol{\alpha}^{i+1}$ and $\boldsymbol{\beta}^{i+1}$ can be computed by: 
\begin{align}
&\boldsymbol{\alpha}^{i+1}=\boldsymbol{\alpha}^{i}+\vartheta ^i\boldsymbol{\mu}^i_{N+1:2N},\label{alphacon1}\\
&\boldsymbol{\beta}^{i+1}=\boldsymbol{\beta}^{i}+\vartheta ^i\boldsymbol{\mu}^i_{1:N},\label{alphacon2}
\end{align}
where $\boldsymbol{\mu}=[\acute{\theta}(\boldsymbol{\alpha},\boldsymbol{\beta})]^{-1}\theta(\boldsymbol{\alpha},\boldsymbol{\beta})$ with $\acute{\theta}(\boldsymbol{\alpha},\boldsymbol{\beta})$ being the Jacobian matrix of ${\theta}(\boldsymbol{\alpha},\boldsymbol{\beta})$, and $\vartheta ^i$ is the largest value of $\Uppi^m$ at iteration $i$ satisfying:
\begin{equation}\label{con}
\|\theta(\boldsymbol{\alpha}^{i}+\Uppi^m\boldsymbol{\mu}^i_{N+1:2N},\boldsymbol{\beta}^{i}+\Uppi^m\boldsymbol{\mu}^i_{1:N})\|\leq(1-\wp\Uppi^m)\|\theta(\boldsymbol{\alpha},\boldsymbol{\beta})\|,
\end{equation}
where $m\in \{1,2,...\}$, $\Uppi^m \in (0,1)$, and $\wp \in (0,1)$. 
\subsection{Step two: Optimizing NOMA Power Allocation Coefficients}
Consider the following sum-fraction optimization problem:
\begin{equation}
\mathop 
{\min}
\limits_{\boldsymbol{\Omega} \in C}  
\sum\limits_{j = 1}^J 
\frac{\mathcal{A}_j(\boldsymbol{\Omega})}{\mathcal{B}_j(\boldsymbol{\Omega})},
\label{stage_2_1}
\end{equation}
where $J$ represents the total number of fractional terms, and $\boldsymbol{\Omega}$ is the vector of optimization variables within the feasible domain $C$.
It can be shown Eq.~\eqref{stage_2_1} is equivalent to:
\begin{equation}
\mathop 
{\min }
\limits_{\boldsymbol{\Omega} \in C,\varpi_j > 0}  
\sum\limits_{j = 1}^J 
\varpi_j
\mathcal{A}_j^2(\boldsymbol{\Omega}) + 
\sum\limits_{j = 1}^J 
\frac{1}{4\varpi_j} 
\frac{1}{\mathcal{B}_j^2(\boldsymbol{\Omega})}.
\label{stage_2_2}
\end{equation}
The solution to both Eq.~\eqref{stage_2_1} and Eq.~\eqref{stage_2_2} is identical.
It is worth noting that if $\mathcal{B}_j(\boldsymbol{\Omega})$ is concave and $\mathcal{A}_j(\boldsymbol{\Omega})$ is convex, then problem in Eq.~\eqref{stage_2_2} becomes a convex quadratic problem for the given $\varpi_j$.
Building on this, the convex problem in Eq.~\eqref{stage_2_2} is solved for a given $\varpi_j = {1}/{{2{\mathcal{B}_j}(\boldsymbol{\Omega}){\mathcal{A}_j}(\boldsymbol{\Omega})}}$, and the value of $\varpi_j$ is updated in the next iteration.
Thus, with a fixed UAV trajectory, problem $\text{P}_1$ can be rewritten in the following equivalent form:
\begin{align}
& \text{P}_6: 
\underset{\fgelb_{1}[n],\fgelb_{2}[n],\varpi[n] > 0}{\min}
\:\sum\limits_{n = 1}^N  
\varpi[n]{\fged_\text{sum}^2[n]} 
\nonumber
\\&\quad\quad\quad\quad\quad\quad\quad\quad ~+
\sum\limits_{n = 1}^N 
\frac{1}{4\varpi[n]} \frac{1}{R_\text{sum}^2[n]}\\
&s.t.: \frac{{\fgelb_{2}[n]}|h_{su}[n]|^2}{{\fgef_{\min }}[n]}-{\fgelb_{1}[n]}|h_{su}[n]|^2\geq \frac{\epsilon_{1}^{2}[n]}{\varsigma_1[n]},  \forall n,
\label{p6_a}
\tag{36a}
\end{align}
\begin{subequations}
\begin{align}
&\hspace{-2mm}
{\fgelb_{2}[n]}|h_{sd}[n]|^2-
{\fgelb_{1}[n]}|h_{sd}[n]|^2\chi[n]
\geq 
\varepsilon_{1}^2[n]\chi[n],  \forall n,
\label{p6_b}
\tag{36b}
\\
&\quad\quad  \eqref{power1}, \eqref{power2},\eqref{p1_e},\nonumber
\end{align}    
\end{subequations}
where 
$\chi[n]=\gamma_{\min}[n]-
\frac{\fgec_1
{\left|\sum\limits_{m=1}^Mg_{sr}[n,m]\right|^2}{|h_{ud}[n]|^2}}{M\fgeeszett^2_0\varepsilon_{2}^2[n]}$ 
and
$\varpi[n]=\frac{1}{2\fged_\text{sum}^2[n]R_\text{sum}^2[n]}$. 
It is evident that all constraints are linear and convex. However, the objective function remains non-convex due to the non-concave nature of the sum rate function. To address this, we apply the result from the following corollary~\cite{10100913}.
\begin{corollary}\label{corollary1}
Let $\mathcal{F}$ be a monotonically decreasing function of the ratio $\frac{\mathcal{C}_j(\boldsymbol{\mho})}{\mathcal{D}_j(\boldsymbol{\mho})}$.
The optimization problem
\begin{equation}
\tag{37}
\min_{\boldsymbol{\mho} \in C} \sum_{j=1}^{J} \mathcal{F}_j \left( \frac{\mathcal{C}_j(\boldsymbol{\mho})}{\mathcal{D}_j(\boldsymbol{\mho})} \right),
\end{equation}
is equivalent to:
\begin{equation}
\tag{38}
\min_{\boldsymbol{\mho} \in C, \lambda_j} \sum_{j=1}^{J} \mathcal{F}_j \left( 2\lambda_j \sqrt{\mathcal{C}_j(\boldsymbol{\mho})} - \lambda_j^2 \mathcal{D}_j(\boldsymbol{\mho}) \right),
\end{equation}
where $\lambda_j$ is updated iteratively as:
$\lambda_j = \frac{\sqrt{\mathcal{C}_j(\boldsymbol{\mho})}}{\mathcal{D}_j(\boldsymbol{\mho})}$.
\end{corollary}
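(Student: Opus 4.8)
The plan is to prove the stated equivalence by a per-term decoupling argument: I would show that each ratio $\mathcal{C}_j(\boldsymbol{\mho})/\mathcal{D}_j(\boldsymbol{\mho})$ is recovered exactly as the maximum over an auxiliary scalar $\lambda_j$ of the concave quadratic $h_j(\lambda_j)\triangleq 2\lambda_j\sqrt{\mathcal{C}_j(\boldsymbol{\mho})}-\lambda_j^2\,\mathcal{D}_j(\boldsymbol{\mho})$, and then invoke the monotonicity of $\mathcal{F}_j$ to turn this inner maximization into an inner minimization that merges with the outer minimization over $\boldsymbol{\mho}$. This is the heart of the quadratic-transform mechanism, so the argument is essentially algebraic once the right quantity is maximized.

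First I would fix an arbitrary $\boldsymbol{\mho}\in C$ and treat $h_j$ as a function of the scalar $\lambda_j$ alone. Since $\mathcal{D}_j(\boldsymbol{\mho})>0$, $h_j$ is strictly concave, so its unconstrained maximizer is obtained from the first-order condition $\partial h_j/\partial\lambda_j = 2\sqrt{\mathcal{C}_j}-2\lambda_j\mathcal{D}_j=0$, giving $\lambda_j^\star=\sqrt{\mathcal{C}_j}/\mathcal{D}_j$, which is exactly the stated update rule. Substituting back gives $h_j(\lambda_j^\star)=2\mathcal{C}_j/\mathcal{D}_j-\mathcal{C}_j/\mathcal{D}_j=\mathcal{C}_j/\mathcal{D}_j$, hence $\max_{\lambda_j}h_j(\lambda_j)=\mathcal{C}_j(\boldsymbol{\mho})/\mathcal{D}_j(\boldsymbol{\mho})$.

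Next I would use that $\mathcal{F}_j$ is monotonically decreasing in its argument. Applying a decreasing function to a maximum yields the minimum of the composition, i.e. $\mathcal{F}_j\!\big(\max_{\lambda_j}h_j(\lambda_j)\big)=\min_{\lambda_j}\mathcal{F}_j\!\big(h_j(\lambda_j)\big)$, the minimizer again being $\lambda_j^\star$. Combining with the previous step, for every $j$ and every fixed $\boldsymbol{\mho}$ we get $\mathcal{F}_j(\mathcal{C}_j/\mathcal{D}_j)=\min_{\lambda_j}\mathcal{F}_j(h_j(\lambda_j))$. Summing over $j$ and noting that each inner minimization acts on a disjoint variable $\lambda_j$ (the terms are separable), the sum of the minima equals the minimum of the sum, so the outer minimization over $\boldsymbol{\mho}$ absorbs all the $\lambda_j$ jointly. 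This produces the claimed equivalence and shows both problems share the same optimal $\boldsymbol{\mho}$.

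I expect the main obstacle to be careful bookkeeping of positivity and monotonicity rather than any heavy computation. The concavity of $h_j$—and therefore the validity of the first-order condition and the max-to-min conversion—relies on $\mathcal{D}_j(\boldsymbol{\mho})>0$ on all of $C$, which I would state explicitly and verify for the $\mathcal{B}_j^2$-type denominators appearing in Eq.~\eqref{stage_2_2}. The second delicate point is justifying $\sum_j\min_{\lambda_j}=\min_{\{\lambda_j\}}\sum_j$ and the subsequent commutation of this joint minimization with $\min_{\boldsymbol{\mho}}$; both are legitimate precisely because the $\lambda_j$ are free and disjoint from the constraints defining $C$. I would close by noting that the equivalence is tight only at $\lambda_j=\lambda_j^\star$, which is why the alternating scheme re-computes $\lambda_j=\sqrt{\mathcal{C}_j}/\mathcal{D}_j$ after each $\boldsymbol{\mho}$-update.
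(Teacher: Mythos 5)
Your argument is correct, and it is the canonical derivation of the quadratic transform (the Shen--Yu fractional-programming lemma): fix $\boldsymbol{\mho}$, observe that $h_j(\lambda_j)=2\lambda_j\sqrt{\mathcal{C}_j}-\lambda_j^2\mathcal{D}_j$ is strictly concave with maximizer $\lambda_j^\star=\sqrt{\mathcal{C}_j}/\mathcal{D}_j$ and maximum value $\mathcal{C}_j/\mathcal{D}_j$, convert the inner max to an inner min via the decreasing monotonicity of $\mathcal{F}_j$, and merge the separable inner minimizations with the outer one. For comparison: the paper does not prove this corollary at all --- it is imported verbatim from the cited fractional-programming reference --- so your proposal supplies a self-contained justification where the paper defers to a citation; there is nothing in the paper's treatment to contrast against beyond that. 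Two small points worth making explicit if this were written out: (i) you correctly flag $\mathcal{D}_j>0$, but you should also state $\mathcal{C}_j\ge 0$ so that $\sqrt{\mathcal{C}_j}$ is real and the substituted maximum value is well defined; and (ii) the equivalence you establish is of optimal values and optimal $\boldsymbol{\mho}$ only --- the reformulated objective need not be jointly convex in $(\boldsymbol{\mho},\lambda_j)$, which is why the paper's Algorithm 1 alternates between the $\lambda_j$-update and the $\boldsymbol{\mho}$-update rather than solving (38) in one shot; your closing remark about tightness at $\lambda_j^\star$ captures this correctly.
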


By applying the result from Corollary \ref{corollary1}, the second term in the objective function of $\text{P}_6$ can be rewritten as:
\begin{equation}
\tag{39}
\underset{\fgelb_{1}[n],\fgelb_{2}[n],\lambda[n]}{\min} \:\:  \sum\limits_{n = 1}^N  {\frac{1}{{4{\varpi[n]}}}} \frac{1}{\hat{R}_\text{sum}^2[n]},
\end{equation}
where 
\begin{align}
{\hat{R}_\text{sum}[n]}\nonumber
&=\log_{2}( 1+\fgef_r^1[n])
\\&\nonumber+
\log_{2}
\bigg(1+\fgef_d^2[n]
+2\lambda[n]\sqrt{ {\fgelb_{2}[n]\;|h_{sd}[n]|^2}}
\\&~~~~~~~~~~-
\lambda^2[n](\fgelb_{1}[n]\;|h_{sd}[n]|^2 
+ \varepsilon_{1}^2[n])\bigg),
\tag{40}
\end{align}
with
$\lambda[n]=
\frac{\sqrt{\fgelb_{2}[n]\;|h_{sd}[n]|^2}}{\fgelb_{1}[n]\;|h_{sd}[n]|^2 
+ \varepsilon_{1}^2[n]}$.
Here, $\hat{R}_\text{sum}[n]$ becomes biconcave in terms of both the power allocation coefficients and $\lambda[n]$.
Consequently, the multi-convex optimization problem is formulated as: 
\begin{subequations}
\begin{align}
& \text{P}_7: \underset{\gimel[n],\varpi[n]}{\min} \: \:\sum\limits_{n = 1}^N { {{\varpi[n]}{\fged_\text{sum}^2[n]} + \sum\limits_{n = 1}^N {\frac{1}{{4{\varpi[n]}}}} \frac{1}{\hat{R}_\text{sum}^2[n]}} }
\tag{41}
\\
&s.t.: \quad \eqref{power1}, \eqref{power2}, \eqref{p1_e}, \eqref{p6_a}, \eqref{p6_b},
\nonumber
\end{align}
\end{subequations}
where $\gimel[n]=[\fgelb_{1}[n],\fgelb_{2}[n]]\in \mathbb{R}^{2\times 1}$. 
Note that $\fged_\text{sum}[n]$ depends on the power allocation coefficients, and each coefficient is subject to its respective constraints. 
Thus, $\fged_\text{sum}[n]$ and $\hat{R}_\text{sum}[n]$ are decoupled to enable the distributed optimization of $\fged_\text{sum}[n]$. 
To solve this, the augmented Lagrangian method (ALM) is applied, as defined in Eq.~\eqref{lagrange_aug}, where a penalty term is introduced in the Lagrange function of $\text{P}_7$, yielding a sub-optimal solution.
In Eq.~\eqref{lagrange_aug}, $\mathfrak{z}$ represents the penalty factor, while ${\boldsymbol{\mathfrak{a}}, \boldsymbol{\mathfrak{b}}, \boldsymbol{\Upsilon}, \boldsymbol{\mathfrak{c}}, \boldsymbol{\mathfrak{d}}}$ are the Lagrange multipliers.
These components work together to steer the optimization process towards solutions that are feasible within the problem's constraints while penalizing deviations from these constraints to maintain a strict adherence to them. Each iteration of the optimization, denoted as $l$, yields a solution $B^{(l)}$, progressively refining the approach towards an optimal or sub-optimal solution.
Finally, our proposed efficient low complexity sub-optimal algorithm is sketched in~\textbf{Algorithm~\ref{chap8_alg1}}.

\textcolor{black}{
\begin{proposition}\label{prp_2}
The objective function value of $\text{P}_1$ would be improved via the iterative algorithm.
\end{proposition}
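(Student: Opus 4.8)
The plan is to prove that the two-step alternating procedure generates a sequence of objective values $\big\{\sum_{n=1}^N \eta_{EE}[n]\big\}$ that is monotonically non-decreasing and bounded above, hence convergent. I would index the outer alternation by $t$, write the iterate as $\big(\boldsymbol{u}^{(t)}[n],\fgelb_1^{(t)}[n],\fgelb_2^{(t)}[n]\big)$, and let $\Phi^{(t)}=\sum_{n=1}^N \eta_{EE}[n]$ denote the genuine EE objective of $\text{P}_1$ evaluated at that iterate. The target is $\Phi^{(t+1)}\ge \Phi^{(t)}$ for all $t$, together with $\sup_t \Phi^{(t)}<\infty$.

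First I would show that the trajectory step does not decrease $\Phi$. By Proposition~\ref{prp_1}, solving the subtractive problem $\text{P}_2$ with parameters $\{\boldsymbol{\alpha}^\star,\boldsymbol{\beta}^\star\}$ satisfying \eqref{conn1}--\eqref{conn2} is equivalent to solving the fractional problem $\text{P}_1$ for fixed power coefficients. Inside this step the SCA refinement through $\text{P}_5$ rests on the first-order surrogates: each replaced term is a global lower bound that is tight at the current expansion point $\boldsymbol{u}^{(k)}[n]$, which is precisely the content of the inequalities producing $\tilde{e}^{c[n]}$ and $\tilde{e}^{d[n]}$ and of the bounds \eqref{taylor1}--\eqref{taylor2}. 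Hence the maximizer of $\text{P}_5$ attains a surrogate value at least as large as at the expansion point, so the inner SCA iterations are non-decreasing, while the damped-Newton outer layer drives $\{\boldsymbol{\alpha},\boldsymbol{\beta}\}$ to the values certified by Proposition~\ref{prp_1}. The updated trajectory therefore has $\Phi$ no smaller than its input value.

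Next I would show that the power-allocation step also does not decrease $\Phi$. Here I invoke the quadratic-transform equivalence between the sum-of-ratios form \eqref{stage_2_1} and its reformulation \eqref{stage_2_2}, together with Corollary~\ref{corollary1}, to argue that $\text{P}_7$ shares the optimizer of $\text{P}_1$ for the fixed trajectory. Since $\hat{R}_\text{sum}[n]$ is biconcave in $\gimel[n]=[\fgelb_1[n],\fgelb_2[n]]$ and in $\lambda[n]$, and the remaining constraints \eqref{power1}--\eqref{power2} are linear, each inner update — minimizing over $\gimel[n]$ for fixed $\{\varpi[n],\lambda[n]\}$ and then refreshing $\varpi[n]=\tfrac{1}{2\fged_\text{sum}^2[n]R_\text{sum}^2[n]}$ and $\lambda[n]$ — is a convex minimization whose value cannot increase; translated back through the transform, the corresponding EE value does not decrease. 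The augmented Lagrangian loop of \eqref{lagrange_aug} converges to a (sub-optimal) KKT point of $\text{P}_7$ with penalty factor $\mathfrak{z}$, and I would note that the penalty construction keeps the iterates feasible for the original constraints in the limit.

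Chaining the two monotone sub-steps yields $\Phi^{(t+1)}\ge\Phi^{(t)}$ for every $t$. Boundedness follows from the physical limits: the peak and average source-power constraints bound $\fged_\text{sum}[n]$, and finite transmit power with strictly positive noise variances caps every SINR and hence $R_\text{sum}[n]$, so $\Phi^{(t)}$ is bounded above. By the monotone convergence theorem the sequence converges, proving the claimed improvement and establishing the convergence of Algorithm~\ref{chap8_alg1}. The main obstacle I anticipate is the bookkeeping that both transforms preserve the \emph{exact} EE value at each iterate, so that the improvement measured in the surrogate subproblems transfers back to the genuine objective $\sum_n \eta_{EE}[n]$; in particular I must verify that $\{\boldsymbol{\alpha},\boldsymbol{\beta}\}$ and $\{\varpi[n],\lambda[n]\}$ are re-optimized before the objective is compared, since only then is each surrogate tangent to the true objective and the ascent guaranteed.
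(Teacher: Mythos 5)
Your proposal is correct and follows essentially the same route as the paper's Appendix B: monotone non-decrease of the objective over the trajectory sub-step ($\text{P}_5$, via SCA with tight lower-bound surrogates) and the power-allocation sub-step ($\text{P}_7$, via the quadratic transform), chained together and combined with boundedness from the power and noise constraints to conclude convergence. If anything, your insistence on tracking the \emph{genuine} objective $\Phi=\sum_n \eta_{EE}[n]$ through both surrogate transforms---and on re-optimizing $\{\boldsymbol{\alpha},\boldsymbol{\beta}\}$ and $\{\varpi[n],\lambda[n]\}$ before comparing values---is more careful than the paper's chain, which directly compares $f_{\text{P}_5}$ and $f_{\text{P}_7}$ as if they were the same function; the only element you omit is the paper's final appeal to \cite[Theorem~3.1]{Jong2012AnEG} for the stationarity and convergence-rate claims.
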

\begin{proof}
See Appendix~\ref{Appen_B}.
\QED
\end{proof}
}

\begin{algorithm}[t]\textcolor{black}{
\caption{\textcolor{black}{Iterative Resource Allocation Algorithm for EE Maximization of Miniature UAV-Aided Cooperative THz Networks with Reconfigurable Energy Harvesting Holographic Surfaces}}
\begin{algorithmic}[1]
\renewcommand{\algorithmicrequire}{\textbf{Input:}}
\renewcommand{\algorithmicensure}{\textbf{Output:}}
\REQUIRE Set iteration indices $i=0,k=0,l=0$,\\
\ \ \ \:\:Set the maximum convergence iteration index $I_{\max}$,  \\
\ \ \ \:\:Set the tolerance to $\epsilon_1=\epsilon_2=10^{-3}$,\\
\ \ \ \:\:Initialize 
$\boldsymbol{\alpha}$, 
$\boldsymbol{\beta}$, \\
\ \ \ \:\:Initialize $a^{(k)}[n], b^{(k)}[n], c^{(k)}[n],d^{(k)}[n]$,
$\boldsymbol{u}^{(k)}$,
$\gimel$,\\
\ \ \ \:\:Set Lagrange multipliers 
${\boldsymbol{\mathfrak{a}}^{l}, \boldsymbol{\mathfrak{b}}^{l}, \boldsymbol{\Upsilon}^{l}, \boldsymbol{\mathfrak{c}}^{l}, \boldsymbol{\mathfrak{d}}^{l}}$, \\
\ \ \ \:\:Set the penalty factor $\mathfrak{z}^{l}.$
\STATE \textbf{repeat}\\
\STATE \quad \textbf{while} $|A^{(k)}-A^{(k-1)}|\geq \epsilon_1$ do \\
\STATE \quad \quad Given $\gimel[n]$, solve $\text{P}_5$ to obtain $\boldsymbol{u}^{(k)}[n].$
\STATE \quad \quad Update $b^{(k)}=\ln(c^{(k)}[n])$, $a^{(k)}=\ln(d^{(k)}[n])$ ,\\
\ \ \ \ \ \:according to \eqref{taylor1}  and \eqref{taylor2}.
\STATE \quad \quad Set $k=k+1$.
\STATE   \quad \textbf{end while} 
\STATE \quad \textbf{if} \eqref{con} is satisfied \textbf{then}~\textbf{return}~$\boldsymbol{u}^\ast[n]$.
\STATE \quad \textbf{else}~Update $\boldsymbol{\alpha}$ and $\boldsymbol{\beta}$ according to (\ref{alphacon1}) and (\ref{alphacon2}).
\STATE \quad Set $i=i+1$.
\STATE   \textbf{until} \eqref{conn1} and \eqref{conn2} are satisfied or $i=I_{{\max}}$.
\STATE \textbf{while} $|B^{(l)}-B^{(l-1)}|\geq \epsilon_2$ do \\
\STATE \quad Given $\boldsymbol{u}[n]$, solve $\text{P}_7$ to obtain $\gimel[n]$.
\STATE \quad Update the Lagrange multipliers
$\boldsymbol{\mathfrak{a}}^{l+1}_n$, $\boldsymbol{\mathfrak{b}}^{l+1}_n$, $\boldsymbol{\Upsilon}^{l+1}_n$, \\
\ \ \ \:$\boldsymbol{\mathfrak{c}}^{l+1}_n$, and $\boldsymbol{\mathfrak{d}}^{l+1}_n$.
\STATE \quad Update the penalty factor $\mathfrak{z}^{l+1}=2\mathfrak{z}^{l}$.
\STATE \quad Set $l=l+1$.
\STATE   \textbf{end while} 
\STATE   \textbf{return} $(\boldsymbol{u}^\ast[n],\gimel^\ast[n])$.
\end{algorithmic}\label{chap8_alg1}}
\end{algorithm}

\textcolor{black}{\section{Complexity Analysis}\label{chap8_sec5}
The overall complexity of the proposed two-stage solution is determined by the complexities of solving two optimization problems: $\text{P}_5$ and $\text{P}_7$, associated with finding the optimal miniature UAV trajectory and the NOMA power coefficients.
$\text{P}_5$ has \( (8N+3) \) constraints and \( 5N \) decision variables. Its complexity, based on the Successive Convex Approximation methodology, is \( O_1 = \mathcal{O}((8N+3)(5N)^3) \).
The complexity of $\text{P}_7$, following the Augmented Lagrangian Method, is \( O_2 = \mathcal{O}(N^2) \).
Hence, the total complexity of the proposed solution is the sum of the individual complexities: \( O_{\rm{total}} = O_1 + O_2  = \mathcal{O}((8N+3)(5N)^3+N^2) \), indicating a polynomial time complexity of degree four~\cite{10100913}.}

\section{Simulation Results and Discussions}\label{chap8_sec6}

\begin{table}[!t]
\caption{Simulation Parameters for EE Maximization of THz-NOMA Networks Empowered by Holographic Surfaces for Miniature UAVs.}
\label{simulation_parameters}
\centering
\begin{tabular}{|l|l|}
\hline
\textbf{Parameter}           & \textbf{Value}                   \\ \hline
Area side length             & $30~\rm{meters}$                      \\
Carrier frequency            & $1.2$ THz                        \\ 
Transmission bandwidth       & $10$ GHz                         \\ 
Absorption coefficient, $\xi(f)$ & 0.005           \\ 
RHS absorption coefficient, $\textdotbreve{a}_M$ & 1           \\ 
Maximum miniature UAV flying speed, $V_{\max}$ & $1~\rm{meter/second}$                     \\
Duration of each time slot, $\Delta_t$          & $0.1~\rm{second}$                     \\ 
Miniature UAV Operation time, $T$         & $45~\rm{second}$                      \\ 
Noise power spectral density & $-174~\rm{dBm/Hz}$               \\ 
Source Node altitude, $H_s$          & $2~\rm{meters}$                       \\ 
Miniature UAV altitude, $H_u$          & $3~\rm{meters}$                       \\ 
Peak power, $\fged_{\rm{peak}}$  & $1$ $\rm{W}$                            \\
Circuit power, $\fged_c$         & $0.52$ $\rm{W}$                         \\ \hline
\end{tabular}
\end{table}

Our simulation setup involves a scenario within a square area, each side being $30~\rm{meters}$, containing one user and a miniature UAV, both randomly placed. To minimize path loss peaks, the carrier frequency is set to $f = 1.2$ THz suggested by~\cite{9838676,9370130} with a transmission bandwidth of $10$ GHz. The model also considers the frequency-dependent absorption coefficient, $\xi(f)$, which accounts for molecular absorption loss due to water vapor~\cite{5995306}. 
All statistical results are derived from aggregating data gathered through an extensive set of simulation trials, which include 1000 random realizations of channel gains. This systematic approach provides an in-depth understanding of the dynamics associated with deploying and operating the miniature UAV under specified environmental conditions, offering essential insights for optimizing UAV-assisted communication networks. A summary of all simulation parameters studied in this paper is presented in \textit{Table~\ref{simulation_parameters}}, as suggested in~\cite{8752399,9978646,10464825}.

\begin{figure}[t]
\centering
\begin{tikzpicture}
\begin{axis}[
    width=0.49\textwidth,
    height=0.40\textwidth,
    xlabel={$\bar{p}_{\rm{sum}}$ [$\rm{W}$]},
    ylabel={$\eta_{EE}$ [$\rm{Mbits/Joule}$]},
    grid=major,
    legend style={
        at={(0.99,0.02)}, 
        anchor=south east, 
        font=\scriptsize,
        legend cell align={left},
        legend columns=2,
    },
    tick label style={font=\small},
    xlabel style={font=\small},
    ylabel style={font=\small},
    xmin=1, xmax=8,
    xtick={1,2,3,4,5,6,7,8},
    ymin=30, ymax=80,
    ytick={30,40,50,60,70,80},
    legend entries={
        \ \ \ \ \ \ \ Proposed,
        \!\!\!\!\!\!\!\!\!\!\!\!\!\!\!Solution,
        Method A,
        Method B,
        Method C,
        Method D,
        Method E,
        Initial
    }
]

\addplot[mark=pentagon, solid, black, line width=0.85pt] coordinates {
    (1, 47.2) (2, 57.0) (3, 62.7) (4, 67.7) (5, 69.9) (6, 71.3) (7, 72.2) (8, 73.6)
};
\addplot[solid, white, line width=0.85pt] coordinates {
    (1, 1)
};
\addplot[mark=o, solid, lightpurple, line width=0.85pt] coordinates {
    (1, 42.4) (2, 53.9) (3, 59.6) (4, 62.9) (5, 65.8) (6, 67.1) (7, 68.4) (8, 69.5)
};
\addplot[mark=10-pointed star, solid, red, line width=0.85pt] coordinates {
    (1, 39.5) (2, 51.3) (3, 56.8) (4, 60.5) (5, 64.0) (6, 65.7) (7, 67.2) (8, 68.5)
};
\addplot[mark=diamond, solid, darkgreen, line width=0.85pt] coordinates {
    (1, 38.5) (2, 49.0) (3, 54.1) (4, 58.2) (5, 62.5) (6, 63.9) (7, 65.8) (8, 66.7)
};
\addplot[mark=square, solid, blue, line width=0.85pt] coordinates {
    (1, 34.2) (2, 44.0) (3, 49.7) (4, 54.7) (5, 56.9) (6, 58.3) (7, 59.2) (8, 60.6)
};
\addplot[mark=triangle, solid, cyan, line width=0.85pt] coordinates {
    (1, 32.5) (2, 42.3) (3, 48.2) (4, 53.2) (5, 55.4) (6, 56.8) (7, 57.7) (8, 58)
};

\addplot[mark=+,densely dashdotted, brown, line width=0.85pt] coordinates {
    (1, 31.2) (2, 41.5) (3, 46.9) (4, 51.7) (5, 53.9) (6, 55.3) (7, 56.2) (8, 56.5)
};

\end{axis}
\end{tikzpicture}
\caption{The impact of average network transmit power, $\Bar{p}_{\text{sum}}$, on the EE of THz-NOMA networks with a miniature UAV empowered by holographic surfaces.}
\label{EE_vs_p_sum}
\end{figure}

To thoroughly assess the performance of our proposed resource allocation algorithm, we conducted a comparative study using the following benchmarks, each selected to highlight different system aspects:
\begin{itemize}
\item \textcolor{black}{Method A: Assesses the algorithm's performance within a fixed NOMA framework, providing a baseline for how the algorithm performs with static power coefficients. }
\item \textcolor{black}{Method B: Compares NOMA and orthogonal multiple access (OMA) to identify which access scheme is more efficient, crucial for understanding the benefits of multi-user communication in this context.}
\item \textcolor{black}{Method C: Tests the algorithm with a fixed UAV flight path, isolating the effects of UAV trajectory optimization and measuring its contribution to overall performance. }
\item \textcolor{black}{Method D: Evaluates a scenario without RHS, focusing on consistent power splitting EH antenna at the UAV antenna. This shows the impact of RHS on EH and EH.}
\item \textcolor{black}{Method E: Implements a fractional programming approach~\cite{9968296} without RHS for comparison, highlighting the benefits of incorporating RHS in our proposed solution.}
\end{itemize}

Fig.~\ref{EE_vs_p_sum} illustrates the EE dynamics as influenced by the average network transmission power, expressed as $\Bar{p}_{\text{sum}}=\fged_{\max}+\fged_{\rm{peak}}+\fged_{c}-\fged_{\rm{EH}}$. In this figure, the 'Initial' curve depicts the EE performance based on an initial, unoptimized (random) configuration of the miniature UAV's flight path. A key finding from our analysis is that our proposed algorithm consistently surpasses various benchmark methods, with its advantage becoming more evident as $\Bar{p}_{\text{sum}}$ increases, resulting in a widening performance gap.
The results demonstrated the effectiveness of our proposed approach, showing improvements of: 30.3\% over Method E, 23.0\% over Method D, 21.2\% over Method C, 18.1\% over Method B, and 7.26\% over Method A. These results strongly affirm the proposed algorithm’s ability to significantly boost EE, proofing its effectiveness within miniature UAV-empowered RHS communication networks.

\begin{figure}[!t]
\centering
\begin{tikzpicture}
\begin{axis}[
    width=0.49\textwidth,
    height=0.40\textwidth,
    xlabel={Miniature UAV Operation time [$\rm{s}$]},
    ylabel={$\eta_{EE}$ [$\rm{Mbits/Joule}$]},
    grid=major,
    legend style={
        at={(0.99,0.02)}, 
        anchor=south east, 
        font=\scriptsize,
        legend cell align={left},
        legend columns=2,
    },
    tick label style={font=\small},
    xlabel style={font=\small},
    ylabel style={font=\small},
    xmin=5, xmax=40,
    xtick={5,10,...,40},
    ymin=20, ymax=50,
    ytick={20,25,...,50},
    legend entries={
        Proposed Solution,
        Method A,
        $~~~~$Method B,
        Method C,
        $~~~~$Method D,
        Initial
    }
]

\addplot[mark=pentagon, solid, black, line width=0.85pt] coordinates {
   (5, 42.7) (10, 45.1) (15, 47.2) (20, 47.8) (25, 47.4) (30, 48.1) (35, 48.1) (40, 48.1) 
};
\addplot[mark=o, solid, lightpurple, line width=0.85pt] coordinates {
    (5, 38.5) (10, 41.2) (15, 43.4) (20, 44.3) (25, 43.8) (30, 44.5) (35, 44.5) (40, 44.5)
};
\addplot[mark=10-pointed star, solid, red, line width=0.85pt] coordinates {
    (5, 37.7) (10, 40.1) (15, 42.2) (20, 42.8) (25, 43.4) (30, 43.1) (35, 43.1) (40, 43.1)
};
\addplot[mark=diamond, solid, darkgreen, line width=0.85pt] coordinates {
    (5, 35.5) (10, 37.9) (15, 40.0) (20, 40.6) (25, 40.2) (30, 40.9) (35, 40.9) (40, 40.9)
};
\addplot[mark=square, solid, blue, line width=0.85pt] coordinates {
   (5, 32.2) (10, 33.6) (15, 35.1) (20, 36.3) (25, 37.5) (30, 38.2) (35, 39.0) (40, 39.6)
};
\addplot[mark=+,densely dashdotted, brown, line width=0.85pt] coordinates {
   (5, 22.2) (10, 22.7) (15, 22.9) (20, 25.7) (25, 28.2) (30, 31.3) (35, 33.9) (40, 35.1)
};

 \end{axis}
\end{tikzpicture}
\caption{The EE versus the operational time of the miniature UAV-empowered holographic surfaces in the THz-enabled network.}\label{EE_vs_time}
\end{figure}

Fig.~\ref{EE_vs_time} offers a detailed examination of how the mission duration, represented by the miniature UAV's operational time $T$, affects EE across various benchmark schemes. The analysis reveals an interesting pattern: as mission time increases, there's a noticeable rise in EE for schemes utilizing fixed trajectories (Method D) and those starting with non-optimized but feasible configurations (`Initial'). This improvement in EE is due to extended communication opportunities and the ability to adjust flight parameters over time. However, this trend is not consistent across all methods; specifically, Methods A, B, and C do not show the same EE increase as $T$ grows.
Quantitatively, extending the mission duration results in EE improvements of at least \(\rm{37.1\%, 26.8\%, 22.8\%, 16.5\%,}\) and \(\rm{12.8\%}\) when using Methods A--E, respectively. These gains indicate that longer mission times provide a strategic benefit by allowing the holographic surfaces-assisted miniature UAV to optimize both communication metrics and flight adjustments, thereby enhancing the overall network QoS.

\begin{figure}[t]
\centering
\begin{tikzpicture}
\begin{axis}[
    width=0.49\textwidth,
    height=0.40\textwidth,
    xlabel={$M$ [Number of reflecting elements]},
    ylabel={$\eta_{EE}$ [$\rm{Mbits/Joule}$]},
    grid=major,
    legend style={
        at={(0.99,0.02)}, 
        anchor=south east, 
        font=\scriptsize,
        legend cell align={left},
        legend columns=2,
    },
    tick label style={font=\small},
    xlabel style={font=\small},
    ylabel style={font=\small},
    xmin=4, xmax=20,
    xtick={4,8,12,16,20},
    ymin=30, ymax=55,
    ytick={30,35,40,45,50,55},
    legend entries={
        \ \ \ \ \ \ \ Proposed,
        \!\!\!\!\!\!\!\!\!\!\!\!\!\!\!Solution,
        Method A,
        Method B,
        Method C,
        Method D,
        Method E,
        Initial
    }
]

\addplot[mark=pentagon, solid, black, line width=0.85pt] coordinates {
    (4, 47.2) (9, 49.0) (12, 50.5) (16, 51.8) (20, 52.9)
};
\addplot[solid, white, line width=0.85pt] coordinates {
    (1, 1)
};
\addplot[mark=o, solid, lightpurple, line width=0.85pt] coordinates {
    (4, 42.4) (9, 45.0) (12, 47.6) (16, 48.7) (20, 50.8)
};
\addplot[mark=10-pointed star, solid, red, line width=0.85pt] coordinates {
    (4, 39.5) (9, 42.7) (12, 44.2) (16, 46.1) (20, 49.0)
};
\addplot[mark=diamond, solid, darkgreen, line width=0.85pt] coordinates {
    (4, 38.5) (9, 41.6) (12, 43.5) (16, 45.7) (20, 48.5)
};
\addplot[mark=square, solid, blue, line width=0.85pt] coordinates {
    (4, 34.2) (9, 39.8) (12, 41.5) (16, 44.9) (20, 47.1)
};
\addplot[mark=triangle, solid, cyan, line width=0.85pt] coordinates {
    (4, 32.5) (9, 36.8) (12, 39.5) (16, 43.7) (20, 45.9)
};

\addplot[mark=+,densely dashdotted, brown, line width=0.85pt] coordinates {
    (4, 31.2) (9, 35.2) (12, 37.6) (16, 41.0) (20, 44.8)
};

\end{axis}
\end{tikzpicture}
\caption{\textcolor{black}{The impact of the number of reflecting elements, $M$, on the EE of THz-NOMA networks with a miniature UAV empowered by holographic surfaces.}}
\label{EE_RHSsize}
\end{figure}

\textcolor{black}{Fig.~\ref{EE_RHSsize} illustrates the variation in EE performance as the number of reflecting elements $M$ increases in THz-NOMA networks with a miniature UAV empowered by holographic surfaces. As expected, a larger $M$ enhances the system’s ability to manipulate the propagation environment, thereby improving both energy harvesting efficiency and link quality. This results in a steady growth in EE across all methods. However, the rate of improvement differs: baseline schemes such as Methods D and E show limited gains as $M$ increases, while Methods A$–$C benefit moderately from additional elements. In contrast, the proposed solution consistently achieves the highest EE values, with the performance gap widening as $M$ grows. This demonstrates that our joint optimization of UAV trajectory and NOMA power allocation is particularly effective when more reflecting elements are available, leveraging them to maximize harvested energy and improve transmission efficiency. These results highlight the scalability and robustness of the proposed design in exploiting large holographic surfaces for miniature UAV-assisted THz communication.}

\begin{figure}[t]
\centering
\begin{tikzpicture}
\begin{axis}[
    width=0.49\textwidth,
    height=0.40\textwidth,
    xlabel={Molecular absorption coefficient [m$^{-1}/1000$]},
    ylabel={$\eta_{EE}$ [Mbits/Joule]},
    grid=major,
    legend style={
        at={(0.99,0.02)}, 
        anchor=south east, 
        font=\scriptsize,
        inner sep=0.5mm, 
        legend cell align={left},
        legend columns=2,
        /tikz/column 1/.style={
            column sep=-1pt,
        },
        /tikz/column 2/.style={
            column sep=0pt,
        },
        /tikz/row 1/.style={
            row sep=-2pt,
        },
        /tikz/row 2/.style={
            row sep=-2pt, 
        },
        /tikz/row 3/.style={
            row sep=-5pt,
        },
    },
    tick label style={font=\small},
    xlabel style={font=\footnotesize, yshift=1.5mm},
    ylabel style={font=\footnotesize, yshift=-1.5mm},
    xmin=5, xmax=25,
    xtick={5,10,...,25},
    ymin=25, ymax=50,
    ytick={20,25,...,50},
    legend entries={
        Proposed Solution,
        Method A,
        $~~~~$Method B,
        Method C,
        $~~~~$Method D
    }
]

\addplot[mark=pentagon, solid, black, line width=0.7pt] coordinates {
    (5, 48.1) (10, 46.7) (15, 45.5) (20, 44.4) (25, 43.3) 
};
\addplot[mark=o, solid, lightpurple, line width=0.7pt] coordinates {
    (5, 44.5) (10, 43.0) (15, 41.5) (20, 40.7) (25, 39.2)
};
\addplot[mark=10-pointed star, solid, red, line width=0.7pt] coordinates {
    (5, 43.1) (10, 42.0) (15, 40.8) (20, 39.1) (25, 37.5)
};
\addplot[mark=diamond, solid, darkgreen, line width=0.7pt] coordinates {
    (5, 40.9) (10, 38.0) (15, 35.8) (20, 34.1) (25, 31.7)
};
\addplot[mark=square, solid, blue, line width=0.7pt] coordinates {
    (5, 35.1) (10, 33.0) (15, 31.8) (20, 28.1) (25, 25.5)
};

 \end{axis}
\end{tikzpicture}
\caption{\textcolor{black}{The impact of the molecular absorption coefficient on the EE of THz-NOMA networks with a miniature UAV empowered by holographic surfaces.}}
\label{EE_vs_mol}
\end{figure}

\textcolor{black}{Fig.~\ref{EE_vs_mol} illustrates the variation in EE performance as the molecular absorption coefficient for THz links changes according to environmental conditions, such as humidity and atmospheric composition. As expected, a higher molecular absorption coefficient leads to lower EE performance across all schemes, primarily due to increased propagation loss and attenuation caused by intense molecular absorption in the THz band. This effect not only reduces the amount of usable power available for energy harvesting at the miniature UAV but also degrades the signal quality received at the destination, thereby impairing the overall system efficiency. It is worth noting that such absorption effects are particularly critical in indoor or humid environments where water vapor dominates the THz channel response. Nevertheless, even under these adverse conditions, our proposed solution consistently outperforms the benchmark methods by jointly optimizing UAV trajectory and NOMA power allocation, thereby demonstrating its robustness and adaptability in realistic THz communication scenarios.}

Nevertheless, the relationship between mission time and EE is complex. The interplay among optimization variables produces a non-linear, though generally increasing, trend in EE as mission duration extends. This highlights the intricate dynamics of EE optimization, where certain adjustments can lead to substantial gains.
The observation that mission duration significantly impacts EE emphasizes an important challenge: \textit{minimizing the task completion time for miniature UAV-empowered holographic surfaces relay systems while meeting specific EE targets}. This requires balancing operational efficiency and mission urgency, suggesting a rich area for further research into optimizing UAV-based communication networks, with or without EH capabilities.

\section{Conclusion and Future Work}\label{sec_5}
In this paper, we explored the complexities of improving the efficiency of a cooperative THz NOMA-based miniature UAV network powered by EH holographic surfaces. We began by formulating an EE optimization problem aimed at refining the network’s resource allocation strategy. A novel deployment plan for the miniature UAV was introduced, designed to enhance THz wireless connectivity while accounting for molecular absorption effects, a crucial element in the path loss channel gain model for THz-enabled UAVs. Building on this, we developed an optimization framework to enhance EE, ensuring it met stringent QoS requirements. The optimization targeted key decision variables, including miniature UAV positioning and NOMA power allocation coefficients, based on a two-episode iterative solution. We demonstrated the effectiveness of the resource allocation algorithm through numerical results, highlighting its advantage when compared to baseline scenarios without trajectory and NOMA power optimizations. Our approach significantly improved network efficiency, extending UAV operational time and battery life, marking a major advancement in the capabilities of THz-NOMA miniature UAV networks with RHS.
\textcolor{black}{A potential direction for future research is to create a more realistic nonlinear EH model for RHS that incorporates active elements, enabling them to simultaneously \textit{absorb} energy for harvesting and \textit{reflect} signals for communication purposes.
Another exciting area would be optimizing the phase shifters of RHS to improve the energy absorption capabilities of each surface and, therefore, maximize EH.}



\appendices
{\color{black}
\section{Proof of the Proposition~\ref{prp_1}}\label{Appen_A}

Problems $\text{P}_1$ and $\text{P}_2$ are equivalent. The equivalence can be proven using an argument analogous to the one presented in [33] via a parametric transform of sum-of-ratios.  Let $x:=\{u[n]\}_{n=1}^N$ collect the trajectory variables in step one, given that power coefficients are fixed. Define
$f_n(x) \triangleq R_{\text{sum}}[n] \ge 0$ and $h_n(x) \triangleq \fged_\text{sum}[n] > 0$. Then the objective of $\text{P}_1$ is 
$\sum_{n=1}^N \frac{f_n(x)}{h_n(x)}$.

According to Section 2 of [33], by the parametric reformulation for sum-of-ratios maximization, there exist nonnegative parameters
$\{\beta_n\}_{n=1}^N$ and positive $\{u_n\}_{n=1}^N$ such that any optimizer $x^\star$ of $\text{P}_1$ also solves
\begin{subequations}
\begin{align}
& 
\underset{x}{\max} 
\: \:
\sum_{n=1}^N u_n\!\left(f_n(x)-\beta_n h_n(x)\right)
\nonumber
\\
\nonumber
&s.t.:  
\text{the constraints of } \text{P}_1
\end{align}
\end{subequations}
with the complementary relations
\begin{align}
& f_n(x^\star)-\beta_n h_n(x^\star)=0,\quad \forall n,\nonumber\\\nonumber
&  u_n=\frac{1}{h_n(x^\star)},\quad \forall n.
\end{align}
Setting $\alpha_n \triangleq u_n$, $f_n=R_{\text{sum}}[n]$, and $h_n=\fged_\text{sum}[n]$ yields exactly the subtractive
program $\text{P}_2$ with objective $\sum_{n=1}^N \alpha_n\!\left(R_{\text{sum}}[n]-\beta_n \fged_\text{sum}[n]\right)$ and the two
equalities
\begin{align}
& R_{\text{sum}}[n]-\beta_n \fged_\text{sum}[n]=0,\quad \forall n,\nonumber\\\nonumber
&  1-\alpha_n \fged_\text{sum}[n]=0,\quad \forall n.
\end{align}
which are Eqs.~\eqref{conn1} and ~\eqref{conn2}. Hence, for the fixed power coefficients, $\text{P}_1$ and $\text{P}_2$ share the same
optimizer $x^\star$. Conversely, any optimizer of $\text{P}_2$ that satisfies the two equalities attains the same objective
value as $\text{P}_1$. This establishes the claimed equivalence.
\QED
}

{\color{black}
\section{Proof of the Proposition~\ref{prp_2}}\label{Appen_B}
The proposed two-step iterative algorithm for solving problem $\text{P}_1$ 
monotonically improves the objective function value in each iteration 
and converges to a stationary point. 

\begin{proof}
Let $\{\boldsymbol{u}^{(j)}, \boldsymbol{\gimel}^{(j)}\}$ denote the feasible solution set 
of $\text{P}_1$ in the $j$-th iteration, 
where $\boldsymbol{u}$ represents the UAV trajectory and 
$\boldsymbol{\gimel}$ denotes the NOMA power allocation coefficients. 
At each iteration, two sub-problems are solved sequentially: 
1) UAV trajectory optimization ($\text{P}_5$),  
2) NOMA power allocation optimization ($\text{P}_7$).  

For a fixed $\boldsymbol{\gimel}^{(j)}$, solving $\text{P}_5$ yields 
$\boldsymbol{u}^{(j+1)}$ such that
\begin{align}
&f_{\text{P}_5}(\boldsymbol{u}^{(j+1)}, \boldsymbol{\gimel}^{(j)})
\geq f_{\text{P}_5}(\boldsymbol{u}^{(j)}, \boldsymbol{\gimel}^{(j)}),
\end{align}
where the inequality holds since the trajectory sub-problem is 
approximated via the SCA, which 
guarantees a non-decreasing objective sequence.  

Similarly, for a fixed $\boldsymbol{u}^{(j+1)}$, solving $\text{P}_7$ yields 
$\boldsymbol{\gimel}^{(j+1)}$ such that
\begin{align}
&f_{\text{P}_7}(\boldsymbol{u}^{(j+1)}, \boldsymbol{\gimel}^{(j+1)})
\geq f_{\text{P}_5}(\boldsymbol{u}^{(j+1)}, \boldsymbol{\gimel}^{(j)}),
\end{align}
where the inequality follows from the quadratic transform and 
fractional programming framework, which ensures monotonic 
improvement of the objective function.  

By combining the above, we obtain
\begin{align}
f_{\text{P}_7}(\boldsymbol{u}^{(j+1)}, \boldsymbol{\gimel}^{(j+1)})
\geq f_{\text{P}_5}(\boldsymbol{u}^{(j)}, \boldsymbol{\gimel}^{(j)}).
\end{align}
Thus, the sequence of objective function values is 
monotonically non-decreasing after each full iteration.  

Moreover, since the system EE objective is bounded above due to 
practical power and trajectory constraints (cf. constraints \eqref{1a}$-$\eqref{1c}, \eqref{power1}, \eqref{power2}, \eqref{p1_a}$-$\eqref{p1_e}), 
the iterative process must converge to a finite limit.  

Finally, by invoking the global optimization result in \cite[Theorem~3.1]{Jong2012AnEG}, 
which establishes that the equivalent sum-of-ratios formulation admits 
a unique solution under Lipschitz and strong monotonicity conditions, 
we conclude that the proposed algorithm converges to a stationary point 
with global linear and local superlinear/quadratic convergence rate. \QED
\end{proof}
}
\bibliographystyle{ieeetr}
\bibliography{ref}
\end{document}